\newtheorem{definition}{Definition}
\newenvironment{proof}{\noindent{\bf Proof:}\rm}{\hfill\hfill$\Box$\par\medbreak}
\newtheorem{theorem}{Theorem}[section]
\newtheorem{corollary}[theorem]{Corollary}
\newtheorem{assumption}[theorem]{Assumption}
\newtheorem{lemma}[theorem]{Lemma}
 \newtheorem{remark}[theorem]{Remark}
\newcommand\Item[1][]{%
	\ifx\relax#1\relax  \item \else \item[#1] \fi
	\abovedisplayskip=0pt\abovedisplayshortskip=0pt~\vspace*{-\baselineskip}}
\newcommand{\real}{\mathbb{R}}
\newcommand{\E}{\mathbb{E}}
\newcommand{\Proj}{\operatorname{Proj}}
\newcommand{\argmin}{\operatorname{argmin}}
\newcommand{\subscr}[2]{{#1}_{\textup{#2}}}
\newcommand{\norm}[1]{\|#1\|}
\newcommand{\abs}[1]{|#1|}
\newcommand{\sdb}[1]{{\color{black}{#1}}}
\newcommand{\as}[1]{{\color{black}{#1}}}
\long\def\cut#1{{}}
\title{\sdb{Convergence Analysis of Iterated Best Response \\ for a Trusted Computation Game}\thanks{A preliminary version of this paper appeared at the \emph{2015 American Control Conference} under the title \emph{Trusted Computation with an Adversarial Cloud}~\cite{SDB-AS-CL:14}. Corresponding author: Shaunak~D.~Bopardikar, United Technologies Research Center, Tel. +1-860-610-7858, Email: \texttt{bopardsd@utrc.utc.com}. This work was funded by United Technologies Research Center (UTRC). Alberto Speranzon (email: \texttt{alberto.speranzon@honeywell.com}) was with UTRC at the time this work was carried out. He is now with Honeywell Aerospace. Cedric Langbort (email: \texttt{langbort@illinois.edu}) is with the University of Illinois at Urbana Champaign.}} 
\author{Shaunak D. Bopardikar \qquad Alberto Speranzon \qquad C\'edric Langbort}
\begin{document}
\maketitle

\begin{abstract}  
We introduce a game of trusted computation in which a sensor equipped with limited computing power leverages a central node to evaluate a specified function over a large dataset, collected over time. We assume that the central computer can be under attack and we propose a strategy where the sensor retains a limited amount of the data to counteract the effect of attack. We formulate the problem as a two player game in which the sensor (defender) chooses an optimal fusion strategy using both the non-trusted output from the central computer and locally stored trusted data. The attacker seeks to compromise the computation by influencing the fused value through malicious manipulation of the data stored on the central node. We first characterize all Nash equilibria of this game, which turn out to be dependent on parameters known to both players. Next we adopt an Iterated Best Response (IBR) scheme in which, at each iteration, the central computer reveals its output to the sensor, who then computes its best response based on a linear combination of its private local estimate and the untrusted third-party output. We characterize necessary and sufficient conditions for convergence of the IBR along with numerical results which show that the convergence conditions are relatively tight.
\end{abstract}

\section{Introduction}

The Internet of Things (IoT) \cite{IOT2015} is the next generation internet where many embedded devices (sensors, actuators, controllers, etc.) are interconnected and can exchange data. Although such devices are becoming increasingly more advanced and capable, the amount of data they can process is still a small fraction of what they can collect. As a consequence, IoT devices may need to leverage intermediate but more capable devices that can store and compute over larger data streams.

\smallskip

We study the problem where a sensor (short for IoT device) exchanges data with a larger and more powerful computer to carry out a specific computation on the data the sensor has collects over a certain period of time. Ideally, the sensor would be able to send/stream the data to the computer, which would then compute the function and send back the result. However, we assume here that the data stored in the central computer can be manipulated maliciously by an attacker. In this case, the sensor has three options: 1) to retain a small amount of the data and compute the function of interest on such trustworthy data knowing that the computation will not be as accurate or, 2) take the risk that the attack is only mildly compromising the data on the central computer so that the result of the computation is close to the true value; or, 3) try to exchange partial results iteratively with the central computer and fuse locally trusted computation on small sample with tampered computation on the full dataset. 

\smallskip

This paper formalizes the third scenario, which has the other two as limiting cases. In particular, we model this problem of trusted computation as a game between the sensor/central computer and the attacker. We design and analyze a simple protocol in which each player plays its best response and we study  its convergence properties. Additionally, this paper makes a ``worst case'' assumption, namely that the attacker knows exactly the fusion strategy adopted by the sensor.

\smallskip

The approach we consider in this paper is related to an emerging field called \emph{adversarial machine learning}, wherein two parties, a learner and an attacker, are involved, see~\cite{LH-ADJ-BN-BIPB-JDT-11,BB-BN-PL-12,CS-TS-MB-TS-13}. The learner uses data to train, for example, a classifier or a regressor, and the attacker is modifying the data so that the learner ends up training the algorithm incorrectly. In this context, the problem is posed as a Bayesian game~\cite{CS-TS-MB-TS-13}, where the learner minimizes the effect of the attack on the learning algorithm, whereas the attacker maximizes the deviation of such learning algorithm from the correct result and towards a strategically chosen outcome, under the assumption that only a subset of the data can be modified. In~\cite{CS-TS-MB-TS-13}, for example, the e-mail spam problem is considered, where the learner is set to train a classifier to discriminate between spam and non-spam, while the attacker tries to maximize the chances that a spam is classified as non-spam. 

\smallskip

The approach considered in this paper also relates to the procedure of \emph{fictitious play (FP)}. In this procedure, each player tries to learn the probability distribution from which the opponent is drawing its actions~\cite{GWB:51,JR:51}. A recent body of work in the control literature analyzes convergence of fictitious play for several scenarios~\cite{JSS-GA:04,JSS-GA:05,JRM-GA-JSS:09}. In particular,~\cite{JSS-GA:04} presents unified energy-based convergence proofs that work for several special classes of games under FP. In~\cite{JSS-GA:05}, convergence to Nash equilibria is analyzed under the assumption that each player can access the derivatives of the update mechanisms, leading to dynamic FP. A variant of FP, known as Joint Strategy FP, is proposed and the convergence analyzed for several classes of games, especially in high-dimensional spaces, see~\cite{JRM-GA-JSS:09}. More recently, Gaussian cheap talk games, such as~\cite{FF-AT-CL-14}, have been considered. In this context, a sender (adversary) sends corrupted information to a receiver (sensor) under the assumption that the adversary has full knowledge of the receiver's private information.

\subsection{Main Contributions}

The contributions of this paper are four-fold. First, we formulate a new problem on trusted computation within a game-theoretic framework and adopt an Iterated Best Response (IBR)~\cite{DMR-MPW-04,FD-98} algorithm to compute final strategies for the sensor and the attacker. More specifically, we consider a protocol such that at each iteration, the attacker reveals its output to the sensor that then computes its best response as a linear combination of its private local estimate and of the untrusted output. The attacker can then, based on the announced policy of the sensor, decide its best response. There is a clear mismatch in the information pattern between attacker and sensor and, in particular, the fact that the attacker cannot access the realization of the private local estimate of the sensor distinguishes this work from the information pattern considered in cheap talk games~\cite{FF-AT-CL-14}. 

\smallskip

Second, we characterize conditions on the existence of equilibria of the game. These conditions and the equilibria themselves turn out to be functions of all of the problem parameters, viz., the private information belonging to both players. Therefore, to obtain results from a single player's perspective, a third contribution of this paper is to define two notions of convergence for the IBR algorithm, depending upon whether the algorithm converges for some initial value picked by the attacker (\emph{weak convergence}), or for every initial value (\emph{strong convergence}). We derive necessary conditions for weak convergence and sufficient conditions for strong convergence. If the algorithm converges, then it also tells the sensor how to optimally fuse its private estimate with the output. We identify regimes in which some sufficient conditions are also necessary. Numerical simulations indicate that the conditions are relatively tight. 

\smallskip

Fourth and finally, the analysis in this paper allows for a certain level of \emph{mismatch} in the distributions used by the players in computing their respective cost functions. This  generalizes the analysis presented in the preliminary conference version~\cite{SDB-AS-CL:14}, which assumed that the attacker knows precisely the mean of the distribution used by the sensor to compute its private estimate. Additionally, in the special scenario considered in~\cite{SDB-AS-CL:14}, the analysis in this paper recovers the results from~\cite{SDB-AS-CL:14}, and in one of the cases, also improves the result.

\smallskip

Given that the proposed framework requires an iterative process between sensor and the central computer, the algorithm presented in this paper could be suitable for computation algorithms that are iterative in nature so that partial results can be exchanged between sensor and the central computer. Examples are eigenvalue/eigenvector computation, matrix factorization, iterative optimization methods, etc. The connection of this work to control theory lies in the fact that convergence analysis of the IBR essentially leads to a closed loop dynamical system. This aspect is similar in flavor to the set-ups analyzed in~\cite{JSS-GA:04,JSS-GA:05,JRM-GA-JSS:09}. Using geometric relationships to bound the evolution, we determine necessary and sufficient conditions on the parameters involved which will lead to stability from any/some initial conditions.

\subsection{Paper Organization}

The paper is organized as follows. The problem formulation and the proposed approach is described in Section~\ref{sec:problem}. Conditions for the existence of equilibria together with an insightful geometric interpretation are presented in Section~\ref{sec:mismatch}. Convergence results for the IBR algorithm are derived in Section~\ref{sec:conv_analysis} along with supporting numerical results. A special case of equal means of the random variables used by the two players is discussed in Section~\ref{sec:equalmeans}. Finally, conclusions and directions for future research are discussed in Section~\ref{sec:conclusion}. 

\section{Problem Formulation}\label{sec:problem}

The problem scenario is depicted in Figure~\ref{fig:scenario}. This work assumes that the data and the computation to be carried out are such that it is possible for the sensor to compute an estimate of the true output locally using some random subset of the data, and that the statistics (up to the mean with a finite second moment) about how the actual value is distributed given a value of estimate is known to the sensor. For example, suppose that the data~$d \in \real^{N \times M}$, consisting of~$N$ data points each represented by an $M$-dimensional feature vector, is uploaded through a trusted sensor to the third-party computer. During the upload process, the sensor could retain a randomly sparsified sample~$\hat{d}\in \real^{N \times M}$ of the original data~$d$. The data, once stored on the central computer, can be compromised by the attacker leading to a different value~$\bar{d} \in \real^{N \times M}$. The sensor seeks to compute the true value of the function~$y = g(d)\in \real^k$ of the data~$d$ where $g:\real^{N \times M} \rightarrow \real^k$ is an algorithm of interest. For example, $g(d)\in \real$ could be the maximum singular value of a matrix with~$\hat{d}$ being a sparsified sample of~$d$. The sensor has only an approximate knowledge, $\hat{y} = g(\hat{d}) \in \real^k$, of~$y$, obtained from the sparsified data~$\hat{d}$. The third-party computer computes the value~$\bar{y} = g(\bar{d}) \in \real^k$ based on the corrupted data~$\bar{d}$. This paper does not address the problem of how to construct a sparsified sample $\hat{d}$, but rather makes an implicit assumption that for a given number of non-zero elements in $\hat{d}$ and a corresponding sparsification procedure, the distribution of $y$ given $\hat{y}$ can be characterized a priori. The actual construction of a specific sampling procedure for a computation such as the maximum eigenvalue would be a topic of future investigation and we direct an interested reader to \cite{DA-FM:07} and \cite{PD-AZ:11} for related existing results.

\smallskip

The sensor needs to design a fusion function~$\phi : \real^k\times \real^k \rightarrow \real^k$ that provides~$y_\mathrm{fused}$ so that the effect of the attack is minimized, i.e., the sensor seeks to minimize the squared error between $\subscr{y}{fused}$ and~$y$ in an expected sense. On the other hand, the attacker seeks to drive the fused value to a different value~$y_A$ instead of the true~$y$, and therefore, seeks to minimize the squared error between $\subscr{y}{fused}$ and~$y_A$ in an expected sense. The attacker returns a value $\bar{y} \in \real^k$, which will be chosen in order to minimize the mean squared error between $\subscr{y}{fused}$ and~$y_A$.

\begin{figure}[!t]
  \centering
  \includegraphics[width=0.6\hsize]{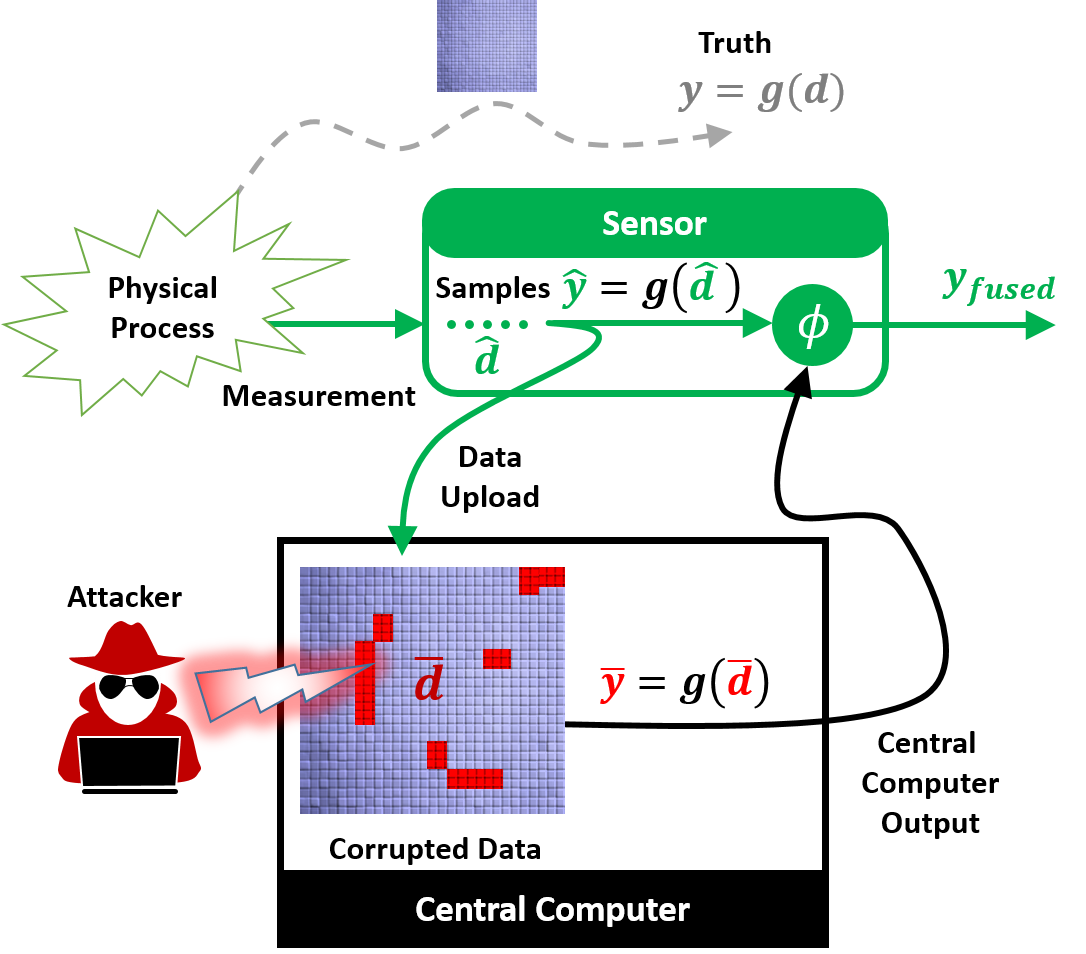}\\
  \caption{Trusted computation over an adversarial third-party computer. The sensor has a large dataset of which it draws a sample~$\hat{d}$ and uploads the entire dataset onto the central computer. The computer returns some value~$\bar{y}$ as the output of the computation, while the sensor computes a value~$\hat{y}$ based on the sample~$\hat{d}$. The goal is to fuse, using a function~$\phi(.)$, the two quantities~$\hat{y}$ and~$\bar{y}$ in an optimal manner. }\label{fig:scenario}
\end{figure}

\smallskip

Given a value of $\hat{y}$, one can view the true value $y$ as a random variable, and we assume that the conditional distribution of $y$ given $\hat{y}$ {\bf ($y|\hat{y}$)} is known to the sensor. From the sensor's perspective, this work only requires it to know the mean of $y$ given $\hat{y}$ and the fact that $y$ given $\hat{y}$ has bounded variance. This paper proposes a new protocol between the sensor and the computer that will enable the sensor to compute an optimal fusion strategy $\subscr{y}{fused} = \phi(\hat{y},\bar{y}) \in \real^k$, in the space of convex combinations of the approximate value~$\hat{y}$ and the computer's output~$\bar{y}$, to minimize the mean square error, i.e.,
\begin{align}\label{eq:sensor_cost}
J_D &:= \E_{y}[\norm{\subscr{y}{fused} - y}^2  \, | \, \hat{y}]\,,
\end{align}
where the norm $\norm{\cdot}$ is the 2-norm.
The goal of the attacker is to pick a $\bar{y}$ to minimize the mean square error 
\begin{align}\label{eq:central_computer_cost}
J_A := \E_{\hat{y}}[ \norm{\subscr{y}{fused} -y_A}^2]\,,
\end{align}
where $y_A$ is a value that the attacker chooses to corrupt the sensor. We will assume that~$y_A$ is a fixed and non-random value known only to the attacker. In other words, the goal of the attacker is to choose a value $\bar{y}$ to be given to the sensor as the \emph{output of the computation} so that when the sensor fuses this value with its private estimate $\hat{y}$, the fused value $\subscr{y}{fused}$ becomes close to a certain value $y_A$ which is selected by the attacker. The value $y_A$ is the value that the attacker wants the sensor to believe is the true output. 
The mismatch in the information structure available to both players is reflected in their respective cost functions in that the random variable relative to which the expectation is computed is the one whose realization is not known to that player. 

\smallskip

In the space of allowed strategies, we choose~$\phi(.)=\phi_\alpha(.)$ to be a convex combination of~$\hat{y}$ and~$\bar{y}$, namely
\begin{equation}\label{eq:linear}
\subscr{y}{fused} := \phi_\alpha(\hat{y},\bar{y}) = \alpha \hat{y} + (1-\alpha) \bar{y}\,.
\end{equation}
In other words, the sensor needs to compute an $\alpha \in [0,1]$, thus deciding the weight to attach to its own value~$\hat{y}$ and the attacker's output~$\bar{y}$. Therefore, both cost functions, $J_D$ and $J_A$ are functions of the action $\alpha \in [0,1]$ of the sensor and the action $\bar{y} \in \real^k$ of the attacker.
To be specific, let~$\mu$ denote the expected value of~$y$ given $\hat{y}$ using a density function chosen by the sensor, modeling the sensor's knowledge about the data, and let~$\zeta$ denote the expected value of~$\hat{y}$ using the density function chosen by the attacker, modeling the knowledge the attacker has about the data. In the space of linear strategies~\eqref{eq:linear}, the cost functions~\eqref{eq:sensor_cost} simplifies to:
\begin{align}\label{eq:JD}
&J_D(\alpha, \bar{y}) = \int_{\real^k} \norm{\alpha \hat{y} + (1-\alpha) \bar{y} - w}^2 f_{ y|\hat{y}}(w) dw \nonumber \\
&= \norm{\alpha (\hat{y} -\bar{y}) + \bar{y}}^2 + \int_{\real^k} \norm{w}^2 f_{ y|\hat{y}}(w) dw\, \nonumber
\\&  - 2 (\alpha (\hat{y} - \bar{y}) + \bar{y})^T \mu 
\end{align}
where~$\mu := \int_{\real^k} w f_{ y|\hat{y}}(w)dw$, and $f_{x|y}(\cdot)$ denotes the probability density function of a random variable~$x$ given~$y$. The cost function~\eqref{eq:central_computer_cost} simplifies to:
\begin{align}\label{eq:JA}
J_A(\alpha, \bar{y}) &= \int_{\real^k} \norm{\alpha w + (1-\alpha) \bar{y} -y_A}^2 f_{\hat{y}}(w) dw \nonumber \\
&= \alpha^2 \int_{\real^k}\norm{w}^2 f_{\hat{y}}(w)dw +\norm{(1-\alpha)\bar{y} -y_A}^2 \nonumber \\ &+ 2\alpha ((1-\alpha)\bar{y} -y_A)^T \zeta, 
\end{align}
where the quantity $\zeta := \int_{\real^k} w f_{\hat{y}}(w)dw$. This is a non zero-sum game for which we will consider the following notion of equilibrium.
\begin{definition}[Nash Equilibrium]\label{def:eqbm}
An admissible pair $(\alpha^*, \bar{y}^*)$ is a \emph{Nash equilibrium} if
\begin{align*}
J_D(\alpha^*, \bar{y}^*) &\leq J_D(\alpha, \bar{y}^*)\,, \qquad \forall \alpha \in [0,1], \text{ and}\\
J_A(\alpha^*, \bar{y}^*) &\leq J_A(\alpha^*, \bar{y})\,, \qquad \forall \bar{y} \in \real^k\,.
\end{align*}
Further, if $\alpha^* \in (0,1)$, then the resulting Nash equilibrium pair is said to be \emph{mixed}. 
\end{definition}

In other words, a pair of strategies is in equilibrium if no other strategy can give a \emph{strictly better} cost against the opponent's strategy, from each player's perspective. Further, when the best response of the sensor is \emph{mixed}, it means that the sensor selects a non-trivial weighted combination of~$\hat{y}$ and~$\bar{y}$. However, it will be clear from the next section that an explicit one-shot computation of the equilibrium strategies requires the players to have full knowledge of all the problem parameters, i.e., $\hat{y}$, $y$, $y_A$, and the expected values~$\mu$ and~$\zeta$ of $y$ given~$\hat{y}$ and~$y$, respectively. Clearly, from one player's point of view, this information is not available. Indeed,~$y_A$ is a private information that the attacker has and is not shared with the sensor and the~$\hat{y}$ is private information of the sensor which is not shared with the attacker. Therefore, we will consider the following iterative scheme in which each player will announce its best response to a strategy announced by the opponent, with the attacker playing first. This protocol is summarized in Algorithm~\ref{algo:IBR}.

\begin{algorithm}[!h]
\caption{Iterated Best Response}
\label{algo:IBR}
\begin{algorithmic}
\STATE \textbf{Assumes:} Attacker plays first, i.e., select a value for $\bar{y}_0$.\\
\STATE $i = 0$
\STATE $\alpha_i = 0$
\STATE \COMMENT{Exit when $\alpha$ reaches steady-state or becomes $1$}
\WHILE{$\alpha_i$ sequence has not converged}
\STATE $ i = i + 1$
\STATE \COMMENT{Client updates its action based on $\bar{y}_{i-1}$}
\STATE $\alpha_i = \argmin_{\alpha \in [0,1]} J_D(\alpha, \bar{y}_{i-1}).$
\STATE \COMMENT{Attacker updates its action based on $\alpha_i$}
\STATE $\bar{y}_i = \argmin_{\bar{y}} J_A(\alpha_i, \bar{y}).$
\STATE \COMMENT{Exit and trust the sensor output only }
\IF {$\alpha_i == 1$}
	\RETURN $\alpha_i$
\ENDIF
\ENDWHILE
\RETURN $\alpha_i$ and $\bar{y}_i$
\end{algorithmic}
\end{algorithm}

\smallskip

To apply Algorithm~\ref{algo:IBR}, we will first need to compute the best responses of each player against an action of the opponent. Setting $\partial J_D/\partial \alpha = 0$ in~\eqref{eq:JD}, we obtain the unconstrained minimizer
\begin{align*}
&(\alpha(\hat{y} - \bar{y}) + \bar{y})^T(\hat{y} - \bar{y}) - (\hat{y} - \bar{y})^T\mu = 0 \Leftrightarrow \subscr{\alpha}{unc}^*(\bar{y}) =\frac{(\mu -\bar{y})^T(\hat{y}-\bar{y})}{\norm{\bar{y} - \hat{y}}^2}\,.
\end{align*}
Due to the constraint $\alpha^* \in [0,1]$, the best response for the sensor is:
\begin{align}
\alpha^*(\bar{y}) &= \begin{cases} 0\,, &\text{if } (\bar{y}-\mu)^T(\hat{y}-\bar{y}) \geq 0\,,  \\
1\,, &\text{if }  (\hat{y}-\mu)^T(\bar{y} - \hat{y})  \geq 0\,, \\
\cfrac{(\bar{y}-\mu)^T(\bar{y}-\hat{y})}{\norm{\bar{y} - \hat{y}}^2}\,, &\text{otherwise.} \end{cases}\label{eq:alpha_star_vec_1}\\
\intertext{A similar calculation for the attacker yields,}
\bar{y}^*(\alpha) &= \begin{cases} \cfrac{y_A-\alpha \zeta}{1-\alpha}, &\text{if } \alpha \neq 1\,, \\
\text{any value}\,,&\text{if } \alpha = 1\,. \end{cases} \label{eq:y_star_vec}
\end{align}
Observe that $\bar{y}^*$ is a linear combination of $y_A$ and $\zeta$. Since we have a total of four parameters $\hat{y}, \zeta, y_A$ and $\mu$, it is convenient considering the plane containing the three points $\hat{y}$,~$\zeta$,~$y_A$ and let $\hat{\mu}$ be the \emph{orthogonal projection} of $\mu$ on to this plane. Therefore, we can write $\mu := \hat{\mu} + \mu^\perp$. Since $\mu{^\perp}^T(\bar{y} - \hat{y}) = 0$, $\alpha^*(\bar{y})$ can be rewritten as
\begin{align}
\label{eq:alpha_star_vec}
\alpha^*(\bar{y}) &= \begin{cases} 0\,, &\text{if } (\bar{y}-\hat{\mu})^T(\hat{y}-\bar{y}) \geq 0\,,  \\
1\,, &\text{if }  (\hat{y}-\hat{\mu})^T(\bar{y} - \hat{y})  \geq 0\,, \\
\cfrac{(\bar{y}-\hat{\mu})^T(\bar{y}-\hat{y})}{\norm{\bar{y} - \hat{y}}^2}\,, &\text{otherwise.} \end{cases} 
\end{align}
Interestingly, we can give a geometric interpretation to~\eqref{eq:alpha_star_vec} and~\eqref{eq:y_star_vec}, as shown in Figure~\ref{fig:geom}. In particular, the figure illustrates the locations of~$\bar{y}$ that lead to different values of~$\alpha^*$, given the values of~$y_A$,~$\hat{y}$ and~$\hat{\mu}$. In the plane defined by $y_A$,~$\hat{y}$ and~$\zeta$, a given subspace containing~$y_A$ can be divided into at most three distinct regions corresponding to the three regimes in~\eqref{eq:alpha_star_vec}. The green regime corresponds to the set of points~$\bar{y}$ for which $\alpha^*(\bar{y}) = 0$ and when it exists, it lies between two blue regimes corresponding to the set of points~$\bar{y}$ for which $\alpha^*(\bar{y}) \in (0,1)$. The red regime corresponds to the set of points~$\bar{y}$ for which $\alpha^*(\bar{y}) = 1$. Note that the boundary points between the green and blue regimes satisfy the property that the lines joining them to~$\hat{\mu}$ and~$\hat{y}$ are orthogonal to each other, as highlighted in Figure~\ref{fig:geom}.

\begin{figure}[!t]
\centering
\includegraphics[width=0.6\hsize]{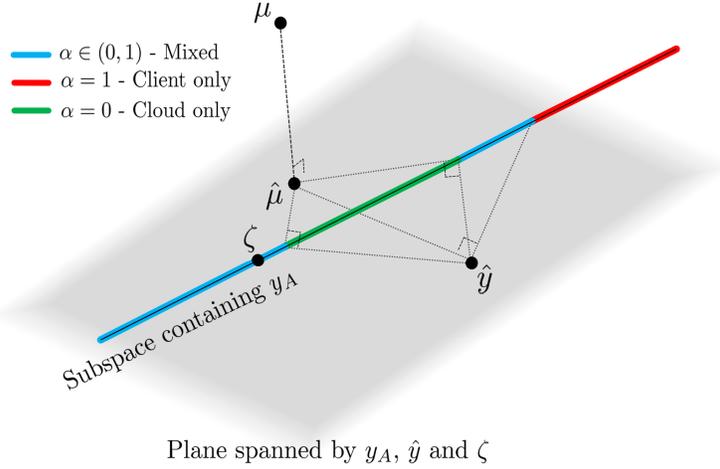}
\caption{The locations for $\bar{y}$ (in the subspace containing~$y_A$) that lead to different values of $\alpha^*$, given the values of~$y_A$,~$\hat{y}$ and~$\mu$. The green locations lead to $\alpha = 0$, the red locations lead to $\alpha = 1$ and the blue locations lead to a mixed $\alpha$.}\label{fig:geom}
\end{figure}

\smallskip

The termination condition within the if loop of Algorithm~\ref{algo:IBR} is due to the fact that when $\alpha = 1$, it implies that the output of the central computer is not to be trusted, and thereafter, the computer is not providing any further useful information than the sensor's private value of~$\hat{y}$. In order to discard trivial initial conditions for which Algorithm~\ref{algo:IBR} would exit at the first iteration, we will restrict our discussion to \emph{non-trivial} initial values of~$\bar{y}_0$ defined next.
\begin{definition}[Non-trivial initial condition]\label{def:non_triv_init}
An initial condition~$\bar{y}_0$ is said to be \emph{non-trivial} if Algorithm~\ref{algo:IBR} does not exit at the first iteration.
\end{definition}
We will consider the following two notions of convergence for Algorithm~\ref{algo:IBR}. 
\begin{definition}[Weak Convergence]\label{def:weak_conv}
Algorithm~\ref{algo:IBR} is said to possess \emph{weak convergence} property if, for \emph{some} non-trivial initial condition $\bar{y}_0$ and for some values of the means~$\mu$ and~$\zeta$, Algorithm~\ref{algo:IBR} outputs $\alpha^* \in [0, 1)$.
\end{definition}
 
\begin{definition}[Strong Convergence]\label{def:strong_conv}
Algorithm~\ref{algo:IBR} is said to possess \emph{strong convergence} property if, for \emph{every} non-trivial choice of $\bar{y}_0$, and for every value of the means~$\mu$ and~$\zeta$, Algorithm~\ref{algo:IBR} outputs $\alpha^* \in [0, 1)$.
\end{definition}

In the sequel, we will see that the weak convergence concept will be useful from the sensor's perspective, whereas strong convergence will be useful from the attacker's perspective. The main contributions in the rest of this paper are to present conditions on the problem parameters, viz. $\hat{y}$ ,$y$, $y_A$, $\mu$ and $\zeta$, under which: 1) equilibrium strategies exist for Algorithm~\ref{algo:IBR} and 2) Algorithm~\ref{algo:IBR} demonstrates weak or strong convergence properties. Whenever Algorithm~\ref{algo:IBR} converges, then the \emph{steady-state} strategies correspond to a Nash equilibrium (cf.~Definition~\ref{def:eqbm}).

\section{Equilibrium Strategies}\label{sec:mismatch}
In this section, we will derive conditions on the parameters $\hat{y}, y_A, \mu$ and $\zeta$ under which equilibria exist for the system of equations~\eqref{eq:alpha_star_vec_1} and~\eqref{eq:y_star_vec}.

Let $\delta := \zeta - \mu$, $z_A:= y_A - \zeta$, and~$\hat{z}:=\hat{y} - \zeta$. The following is the main result on the equilibria of the above system.
\begin{theorem}[Equilibrium]\label{thm:eqbm}
For the system~\eqref{eq:alpha_star_vec_1} and \eqref{eq:y_star_vec}, we have the following:
\begin{enumerate}
\item The pair of strategies $(\alpha^*, \bar{y}^*) = (0, y_A)$ is an equilibrium if and only if
\[
(y_A-\mu)^T(\hat{y}-\mu) \geq \norm{y_A-\mu}^2\,.
\]
\item An equilibrium in mixed strategies exists if and only if 
\[
(\hat{z}^T(\hat{z} + 2z_A - \delta))^2 \geq 4z_A^T(\hat{z} + z_A - \delta)\hat{z}^T\hat{z}\,.
\]
\end{enumerate}
\end{theorem}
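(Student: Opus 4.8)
The plan is to read each item off the fixed-point condition that, at an equilibrium, each player's action coincides with its best response to the other's, as given by \eqref{eq:alpha_star_vec_1} and \eqref{eq:y_star_vec}. For item (i), note first that at $\alpha^*=0$ the attacker's best response \eqref{eq:y_star_vec} is the \emph{unique} value $\bar{y}^*=y_A$. Hence $(0,y_A)$ is an equilibrium if and only if, in addition, $\alpha=0$ is a best response to $\bar{y}=y_A$, i.e. the first branch of \eqref{eq:alpha_star_vec_1} is active there: $(\bar{y}-\mu)^T(\hat{y}-\bar{y})\ge 0$. Substituting $\bar{y}=y_A$ and using $(y_A-\mu)^T(\hat{y}-y_A)=(y_A-\mu)^T(\hat{y}-\mu)-\norm{y_A-\mu}^2$ turns this into the stated inequality, and both implications follow at once because the attacker's reply at $\alpha=0$ is forced, so no further verification is needed.

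For item (ii), a mixed equilibrium forces $\alpha^*\in(0,1)$, so the interior (third) branch of \eqref{eq:alpha_star_vec_1} is active and the attacker plays $\bar{y}^*=(y_A-\alpha^*\zeta)/(1-\alpha^*)$. The crucial simplification is to recenter at $\zeta$: a direct computation gives $\bar{y}^*-\zeta = z_A/(1-\alpha^*)$, whence $\bar{y}^*-\mu = z_A/(1-\alpha^*)+\delta$ and $\bar{y}^*-\hat{y} = z_A/(1-\alpha^*)-\hat{z}$. Writing $\beta:=1-\alpha^*$ and imposing self-consistency $\alpha^* = (\bar{y}^*-\mu)^T(\bar{y}^*-\hat{y})/\norm{\bar{y}^*-\hat{y}}^2$, I would clear denominators by multiplying through by $\beta^2$ and then collect powers of $\beta$. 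One overall factor of $\beta$ cancels (legitimately, as $\beta\ne 0$), leaving a quadratic $a\beta^2-b\beta+c=0$ with $a=\hat{z}^T\hat{z}$ and coefficients $b,c$ affine in $z_A$ and $\delta$; its discriminant simplifies to exactly $(\hat{z}^T(\hat{z}+2z_A-\delta))^2-4\,z_A^T(\hat{z}+z_A-\delta)\,\hat{z}^T\hat{z}$, so that \emph{real solvability} of the fixed-point equation is precisely the stated inequality.

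The forward implication of (ii) is then immediate: a mixed equilibrium supplies a real root $\beta\in(0,1)$, which forces the discriminant to be nonnegative. The converse is where I expect the genuine work. A nonnegative discriminant only produces a real root, and I must still certify that some root lies in the admissible range $\beta\in(0,1)$ — equivalently $\alpha^*\in(0,1)$ — for only then is the interior branch truly the active best response rather than a value clamped to $0$ or $1$. This last check is, encouragingly, self-certifying: any in-range root makes the sensor's unconstrained minimizer equal $\alpha^*\in(0,1)$, which is exactly the condition that the interior branch is active, and $\bar{y}^*$ is automatically a valid attacker reply because $\alpha^*\ne 1$. The crux therefore reduces to a root-location argument — examining the signs of the quadratic at the endpoints $\beta=0$ and $\beta=1$ together with the position of its vertex — to show that under the problem's sign structure the discriminant condition does place a root inside $(0,1)$. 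I anticipate that this endpoint and branch-consistency bookkeeping, rather than the algebra of the reduction, is the main obstacle, and the step at which the \emph{equivalence} (as opposed to mere necessity) of the stated condition must be earned.
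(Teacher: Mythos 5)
Your route is the paper's route in both parts. Part (i) is essentially verbatim: at $\alpha^*=0$ the attacker's reply is forced to $y_A$, and expanding $(y_A-\mu)^T(\hat{y}-y_A)\geq 0$ as $(y_A-\mu)^T\bigl((\hat{y}-\mu)-(y_A-\mu)\bigr)\geq 0$ gives the stated inequality. For part (ii), the paper recenters at $\zeta$ exactly as you do, shows $\bar{y}^*-\zeta = r\,z_A$ with $r = 1/(1-\alpha^*)$, and arrives at the quadratic $z_A^T(\hat{z}+z_A-\delta)r^2 - \hat{z}^T(\hat{z}+2z_A-\delta)r + \hat{z}^T\hat{z} = 0$; your quadratic in $\beta = 1-\alpha^* = 1/r$ is its reciprocal (coefficients reversed) and hence has the same discriminant, so your reduction reproduces the stated condition. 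One bookkeeping remark: with the paper's literal definition $\delta=\zeta-\mu$, the recentering gives $\hat{y}-\mu = \hat{z}+\delta$, so the coefficients come out with $+\delta$ rather than $-\delta$; the displayed condition corresponds to the convention $\delta=\mu-\zeta$. This sign slip originates in the paper, and your identity $\bar{y}^*-\mu = z_A/(1-\alpha^*)+\delta$ inherits it, so your claim that the discriminant simplifies ``exactly'' to the stated expression holds only under the flipped convention.

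The substantive divergence is the converse of (ii), and here you should know that the paper does \emph{not} contain the step you defer: its proof ends with ``the condition now follows from the existence of real roots,'' with no admissibility or root-location check, effectively identifying ``mixed equilibrium exists'' with ``the fixed-point quadratic has a real root.'' You rightly observe that, under Definition~\ref{def:eqbm}, sufficiency also requires a root with $\beta\in(0,1)$ (equivalently $r>1$), and you propose endpoint-sign and vertex bookkeeping to certify this. That bookkeeping cannot succeed unconditionally, because the implication is false as a statement about admissible equilibria: take the scalar case $\mu=\zeta=0$ (so $\delta=0$), $\hat{y}=1$, $y_A=2$. The quadratic is $6r^2-5r+1=0$ with discriminant $1>0$, roots $r\in\{1/2,\,1/3\}$, hence $\alpha^* = 1-1/r \in \{-1,-2\}$, both inadmissible (the root $r=1/2$ even gives the degenerate $\bar{y}=\hat{y}$). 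Directly: for any $\alpha\in[0,1)$ the attacker's reply is $\bar{y}=2/(1-\alpha)\geq 2$, for which the sensor's unconstrained minimizer $\bar{y}/(\bar{y}-1)$ exceeds $1$, so the sensor clamps to $\alpha=1$ and no mixed equilibrium exists even though the discriminant is positive. So relative to the paper you have missed nothing — your reduction \emph{is} the paper's proof — but the extra rigor you demand at the end exposes that the stated condition is necessary, and sufficient only for real solvability of the fixed-point equation, not for existence of an admissible mixed pair; closing the ``if'' direction as you envision would require additional root-location hypotheses that neither the theorem statement nor the paper's proof supplies.
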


\sdb{This result provides conditions on the game parameters under which one of the two types of equilibria considered in this paper would exist. The first type is the one when $\alpha^* = 0$, i.e., the sensor completely trusts the output from the central computer. The condition in (i) essentially describes the set of values for the attacker's intent $y_A$ for which it will be beneficial for the sensor to use the output $\bar{y}$ from the central computer than its own value $\hat{y}$. The second type is the one which will require the sensor to \emph{fuse} its private value $\hat{y}$ with the value $\bar{y}$ in a non-trivial way ($\alpha^* \in (0,1)$).}

\begin{proof}
\begin{enumerate}
\item The claim follows from the best response of the players for the case when $\alpha^* = 0$. In this case, $\bar{y}^* = y_A$. Conversely, if $\bar{y}^* = y_A$, then $\alpha^* = 0$ if and only if
\begin{align*}
&(y_A - \mu)^T(\hat{y} - y_A) \geq 0 \Leftrightarrow (y_A -\mu)^T(\hat{y} - \mu - (y_A - \mu)) \geq 0\,,
\end{align*}
which establishes the first claim.

\item Since we are searching for mixed policies $\alpha^*$, we substitute the expression for $\alpha^*(\bar{y})$ into the fixed point equation for $\bar{y}$ to obtain
\begin{align} \label{eq:zbar_mismatch}
\bar{y} &= \frac{y_A\norm{\hat{y} - \bar{y}}^2}{(\hat{y} - \mu)^T(\hat{y} - \bar{y})} + \frac{(\mu - \bar{y})^T(\hat{y} - \bar{y})}{(\mu-\hat{y})^T(\hat{y} - \bar{y})} \zeta \nonumber \\
\intertext{Subtracting $\zeta$ from both sides, we have}
\bar{y} - \zeta &= \frac{y_A\norm{\hat{y} - \bar{y}}^2}{(\hat{y} - \mu)^T(\hat{y} - \bar{y})} + \Big( \frac{(\mu - \bar{y})^T(\hat{y} - \bar{y})}{(\mu-\hat{y})^T(\hat{y} - \bar{y})} - 1 \Big) \zeta \nonumber \\
\bar{y} - \zeta &= \frac{y_A\norm{\hat{y} - \bar{y}}^2}{(\hat{y} - \mu)^T(\hat{y} - \bar{y})} - \frac{\norm{\hat{y} - \bar{y}}^2}{(\mu-\hat{y})^T(\hat{y} - \bar{y})} \zeta. \nonumber \\
\intertext{By denoting $\bar{z} := \bar{y} - \zeta$, we obtain}
\bar{z} &= \frac{\norm{\hat{z} - \bar{z}}^2}{(\hat{z} - \delta)^T(\hat{z}-\bar{z})} z_A \\
\Rightarrow \bar{z}^* &= rz_A\,, \nonumber
\end{align}
where~$r$ is a scalar that must satisfy
\[
r = \frac{\norm{\hat{z} - rz_A}^2}{(\hat{z} - rz_A)^T(\hat{z}- \delta)}\,.
\]
On simplifying, we obtain the following quadratic equation in $r$:
\begin{align*}
&z_A^T(\hat{z} + z_A - \delta)r^2 - \hat{z}^T(\hat{z} + 2z_A - \delta)r + \hat{z}^T\hat{z} = 0\,.
\end{align*}
The condition now follows from the existence of real roots to the above quadratic equation.
\end{enumerate}
\end{proof}

Clearly, the computation of equilibria requires complete information of the problem parameters. Therefore, in the next sub-section, we will characterize conditions on the parameters under which Algorithm~\ref{algo:IBR} will converge. 

\section{Convergence Analysis}\label{sec:conv_analysis}

In this section, we will derive conditions under which Algorithm~\ref{algo:IBR} possesses weak and strong convergence. 

From the point of view of the sensor, the goal is to characterize conditions on the attack parameter $y_A$ for which Algorithm~\ref{algo:IBR} will converge in the weak and in the strong sense. However, there is an additional uncertainty on where the attacker's mean $\zeta$ will lie. We will commence the convergence analysis for a given value of $\zeta$ and then extend it to the case when there is a bound on the amount of mismatch, such as one in the following assumption.

\begin{assumption}[Mismatch parameter]\label{as:mismatch}
The mismatch $\zeta - \mu $ satisfies
\[
\norm{\zeta - \mu} \leq \frac{\epsilon}{1+\epsilon}\norm{\hat{y} - \mu}\,,
\]
for some given value of $\epsilon \in [0, 1]$.
\end{assumption}
This assumption is reasonable to expect because the deviation $\norm{\hat{y} - \mu}$ will be relatively large when only a small subset of the data being sampled. The convergence analysis will require two intermediate results using geometry which we present in the next sub-section.

\subsection{Preliminary Geometric Results}

Given any $y_A$, we will first show that Assumption~\ref{as:mismatch} leads to the following upper bound on $\norm{\zeta - \hat{\mu}}$, where $\hat{\mu}$ is the orthogonal projection of $\mu$ on to the plane containing $\zeta, y_A$ and $\hat{y}$.

\begin{lemma}[Mismatch bound]\label{lem:mismatch}
For given values of $\zeta, y_A$ and $\hat{y}$, under Assumption~\ref{as:mismatch}, we have 
\[
\norm{\zeta - \hat{\mu}} \leq \epsilon\norm{\hat{y} - \hat{\mu}}\,.
\]
\end{lemma}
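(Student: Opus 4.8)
The plan is to reduce the statement to two applications of the Pythagorean theorem via the orthogonal decomposition $\mu = \hat{\mu} + \mu^\perp$ introduced just before~\eqref{eq:alpha_star_vec}, where $\mu^\perp$ is orthogonal to the plane $P$ spanned by $\zeta$, $y_A$, $\hat{y}$. First I would note that, by definition of $\hat{\mu}$ as the orthogonal projection of $\mu$ onto $P$, the vectors $\zeta - \hat{\mu}$ and $\hat{y} - \hat{\mu}$ both lie in $P$ and are therefore orthogonal to $\mu^\perp = \mu - \hat{\mu}$. Decomposing $\zeta - \mu = (\zeta - \hat{\mu}) - \mu^\perp$ and $\hat{y} - \mu = (\hat{y} - \hat{\mu}) - \mu^\perp$ and applying Pythagoras gives
\[
\norm{\zeta - \mu}^2 = \norm{\zeta - \hat{\mu}}^2 + \norm{\mu^\perp}^2, \qquad \norm{\hat{y} - \mu}^2 = \norm{\hat{y} - \hat{\mu}}^2 + \norm{\mu^\perp}^2.
\]

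Next I would substitute these two identities into the hypothesis of Assumption~\ref{as:mismatch}. Squaring the assumed bound and writing $k := \epsilon/(1+\epsilon)$ yields $\norm{\zeta - \hat{\mu}}^2 + \norm{\mu^\perp}^2 \le k^2\big(\norm{\hat{y} - \hat{\mu}}^2 + \norm{\mu^\perp}^2\big)$, so that
\[
\norm{\zeta - \hat{\mu}}^2 \le k^2 \norm{\hat{y} - \hat{\mu}}^2 - (1 - k^2)\norm{\mu^\perp}^2.
\]
The crucial point is that the common out-of-plane component $\norm{\mu^\perp}^2$ enters both identities with the same weight and cancels in our favour: since $k = \epsilon/(1+\epsilon) < 1$ for $\epsilon \in [0,1]$, the coefficient $1 - k^2$ is strictly positive, so the last term can be dropped to obtain $\norm{\zeta - \hat{\mu}} \le k\,\norm{\hat{y} - \hat{\mu}}$. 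Finally, because $k = \epsilon/(1+\epsilon) \le \epsilon$, the claimed bound $\norm{\zeta - \hat{\mu}} \le \epsilon\norm{\hat{y} - \hat{\mu}}$ follows a fortiori.

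There is essentially no analytical obstacle here; the entire content is geometric bookkeeping, and the only step that must be handled carefully is the sign of the dropped term. One has to recognize that the very same out-of-plane distance $\norm{\mu^\perp}$ appears in both Pythagorean identities, so that replacing $\mu$ by its projection $\hat{\mu}$ shrinks $\norm{\zeta - \mu}$ and $\norm{\hat{y} - \mu}$ compatibly. This is precisely why a cruder triangle-inequality estimate on $\norm{\zeta - \hat{\mu}}$, which would introduce an uncontrolled additive multiple of $\norm{\mu^\perp}$, does not close the argument. It is also worth noting that the computation actually delivers the sharper constant $\epsilon/(1+\epsilon)$, so the factor $\epsilon$ appearing in the statement is not tight.
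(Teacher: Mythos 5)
Your proof is correct, and it takes a genuinely different---and in one respect sharper---route than the paper's. The paper never squares the assumption: it chains $\norm{\zeta-\hat{\mu}} \le \norm{\zeta-\mu}$ (nonexpansiveness of the projection), the triangle inequality $\norm{\hat{y}-\hat{\mu}} \ge \norm{\hat{y}-\mu} - \norm{\hat{\mu}-\mu}$, and the key observation that $\norm{\hat{\mu}-\mu} \le \norm{\zeta-\mu}$ because $\hat{\mu}$ is the closest point of the plane to $\mu$ while $\zeta$ lies in that plane; combining these with Assumption~\ref{as:mismatch} gives $\norm{\hat{y}-\hat{\mu}} \ge \frac{1}{\epsilon}\norm{\zeta-\hat{\mu}}$, i.e.\ exactly the stated constant $\epsilon$. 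Your version replaces all of this with the two Pythagorean identities and cancellation of the common $\norm{\mu^\perp}^2$ term (note $1-k^2 \ge 0$ is all you need to drop it), which buys the sharper constant $\epsilon/(1+\epsilon)$---confirming your closing remark that the lemma's $\epsilon$ is not tight. One correction to your aside, though: the triangle-inequality route \emph{does} close the argument, precisely because the out-of-plane distance $\norm{\mu^\perp} = \norm{\hat{\mu}-\mu}$ is not uncontrolled---it is dominated by $\norm{\zeta-\mu}$ since $\zeta$ lies in the plane; this is exactly how the paper absorbs it, at the cost of the weaker constant. Finally, a wording nit: say the affine plane \emph{containing} $\zeta$, $y_A$, $\hat{y}$ rather than ``spanned by'' them, since your Pythagorean identities require $\mu^\perp$ to be orthogonal to the direction space of that affine plane, which is what the definition of $\hat{\mu}$ in the paper provides.
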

\begin{proof}
Consider a realization of the parameters $\zeta, \hat{y}, y_A$ and $\mu$ as shown in Figure~\ref{fig:lem_mismatch}. Since $\hat{\mu}$ is the orthogonal projection of $\mu$ on to the plane containing $\zeta, \hat{y}$ and $y_A$,
\begin{equation}\label{eq:muhat}
\norm{\zeta - \hat{\mu}} \leq \norm{\zeta - \mu}\,.
\end{equation}
Applying the triangle inequality, we obtain
\begin{align*}
\norm{\hat{y} - \hat{\mu}} &\geq  \norm{\hat{y} - \mu} - \norm{\hat{\mu} - \mu} \\
&\geq  \norm{\hat{y} - \mu} - \norm{\zeta - \mu} \\
&\geq (1+\frac{1}{\epsilon}) \norm{\zeta - \mu} - \norm{\zeta - \mu} \\
&=\frac{1}{\epsilon}\norm{\zeta - \mu} \geq \frac{1}{\epsilon}\norm{\zeta - \hat{\mu}}\,,
\end{align*}
where the second inequality follows since $\hat{\mu}$ is the orthogonal projection of $\mu$ on to the plane containing $\zeta, y_A$ and $\hat{y}$, the third inequality follows from Assumption~\ref{as:mismatch} and the final inequality follows from~\eqref{eq:muhat}.
\end{proof}
\begin{figure}[!t]
\centering
\includegraphics[width=0.6\hsize]{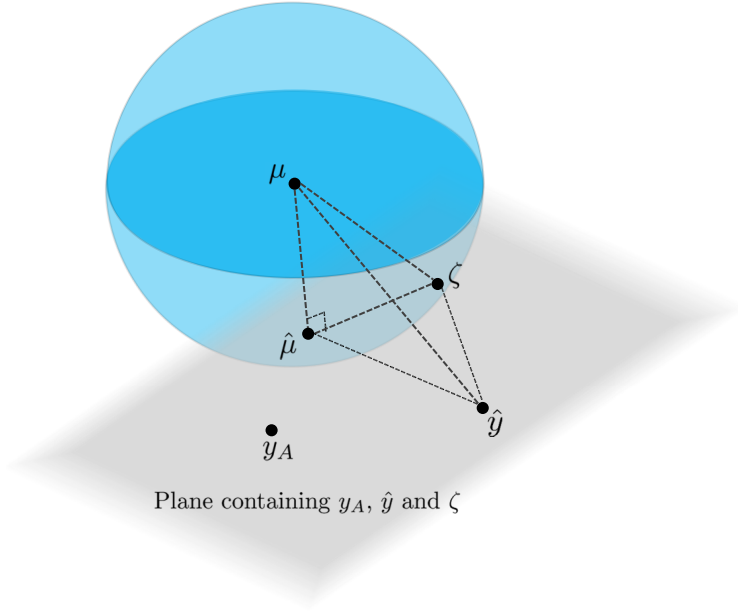}
\caption{Illustrating Lemma~\ref{lem:mismatch}. The attacker's mean, $\zeta$ is assumed to lie inside the sphere centered around $\mu$ and the shaded region illustrates the plane containing the points $\hat{y}, y_A$ and $\zeta$. The point $\hat{\mu}$ is the orthogonal projection of $\mu$ on to this plane.}\label{fig:lem_mismatch}
\end{figure}

We will also require another geometric result which will aid the proof of the necessary condition. We introduce the following notation: given two vectors $x_1, x_2 \in \real^k$, the vector $\Proj(x_1, x_2) \in \real^k$ denotes the orthogonal projection of~$x_1$ onto~$x_2$.

\begin{lemma}\label{lem:ybar}
Suppose that a point $\bar{y}$ satisfies the following:
\begin{enumerate}
\item $\alpha^*(\bar{y}) \in (0,1)$, and
\item $\bar{y}$ lies in the closure of the half plane defined by the line joining $\hat{y}$ and $y_A$ with the side not containing $\hat{\mu}$.
\end{enumerate}
Then, 
\[
\norm{\Proj(\hat{y}-\hat{\mu},y_A - \hat{y})} \geq \norm{\Proj(\hat{y}-\hat{\mu},\bar{y} - \hat{y})}\,.
\]
\end{lemma}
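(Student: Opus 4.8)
The plan is to reduce the statement to a two–dimensional problem in the plane $P$ through $\hat{y}$, $y_A$ and $\hat{\mu}$, and then to a comparison of two cosines. First I would dispose of the components of $\bar{y}$ orthogonal to $P$. Placing $\hat{y}$ at the origin and writing $m := \hat{\mu}-\hat{y}$ and $\bar{y}-\hat{y} = b_P + b_\perp$ with $b_P\in P$ and $b_\perp \perp P$, both $m$ and $y_A-\hat{y}$ lie in $P$, so $m^T(\bar{y}-\hat{y}) = m^T b_P$ while $\norm{\bar{y}-\hat{y}} \ge \norm{b_P}$. Hence $\norm{\Proj(\hat{y}-\hat{\mu},\bar{y}-\hat{y})} = |m^T b_P|/\norm{\bar{y}-\hat{y}} \le |m^T b_P|/\norm{b_P} = \norm{\Proj(\hat{y}-\hat{\mu},\, \hat y+b_P-\hat y)}$, and the two hypotheses (the sign of $m^T(\bar{y}-\hat{y})$ and the side of the line $\overline{\hat{y}y_A}$, both governed by vectors lying in $P$) are inherited by the in–plane point $\hat{y}+b_P$. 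It therefore suffices to prove the inequality assuming $\bar{y}\in P$, and I may assume $\hat{\mu}\neq\hat{y}$, since otherwise both projections vanish.

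Working in $P$, I would introduce orthonormal coordinates centered at $\hat{y}$ with first axis along $m=\hat{\mu}-\hat{y}$, and set $d := \norm{\hat{\mu}-\hat{y}}>0$. Writing $\phi_a$ and $\phi_b$ for the angles that $y_A-\hat{y}$ and $\bar{y}-\hat{y}$ make with this first axis, the projection magnitudes become $\norm{\Proj(\hat{y}-\hat{\mu},y_A-\hat{y})} = d\,|\cos\phi_a|$ and $\norm{\Proj(\hat{y}-\hat{\mu},\bar{y}-\hat{y})} = d\,|\cos\phi_b|$, so the claim reduces to $|\cos\phi_b| \le |\cos\phi_a|$. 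Next I would translate the two hypotheses into angular constraints. Hypothesis (i) in particular rules out $\alpha^*(\bar{y})=1$, which by~\eqref{eq:alpha_star_vec} means $(\hat{y}-\hat{\mu})^T(\bar{y}-\hat{y})<0$, i.e.\ $\cos\phi_b>0$ and hence $\phi_b\in(-\pi/2,\pi/2)$; the remaining Thales–disk part of (i) only constrains the radius $\norm{\bar{y}-\hat{y}}$ and is not needed. Hypothesis (ii), expressed through the in–plane normal $(-\sin\phi_a,\cos\phi_a)$ to $\overline{\hat{y}y_A}$, says $\bar{y}$ and $\hat{\mu}$ lie on opposite sides, which in angular variables reads $\sin\phi_a\,\sin(\phi_b-\phi_a)\ge 0$.

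Finally I would reflect across the first axis if necessary so that $\sin\phi_a\ge 0$, i.e.\ $\phi_a\in[0,\pi]$; this reflection fixes $m$, leaves every $|\cos|$ unchanged, and preserves the opposite–sides relation, so it costs nothing. The opposite–sides inequality then becomes $\sin(\phi_b-\phi_a)\ge 0$, placing $\phi_b\in[\phi_a,\phi_a+\pi]\pmod{2\pi}$; intersecting with $\phi_b\in(-\pi/2,\pi/2)$ and using the monotonicity of $\cos$ gives the result in two short cases. If $\phi_a\in[0,\pi/2]$ then $\phi_b\in[\phi_a,\pi/2)$, so $0\le\cos\phi_b\le\cos\phi_a=|\cos\phi_a|$; if $\phi_a\in(\pi/2,\pi]$ then the admissible angles lie in $(-\pi/2,\phi_a-\pi]$, whose cosine is maximized at the endpoint $\phi_a-\pi$ with $\cos(\phi_a-\pi)=-\cos\phi_a=|\cos\phi_a|$, again giving $\cos\phi_b\le|\cos\phi_a|$ (the degenerate case $\sin\phi_a=0$ gives $|\cos\phi_a|=1$ and is immediate). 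The main obstacle I anticipate is purely bookkeeping: fixing the correct sign in the opposite–sides condition (ii) together with the reflection normalization, and making the reduction to $P$ clean for a general $\bar{y}\in\real^k$; once the hypotheses are in the angular form above, the cosine comparison is immediate.
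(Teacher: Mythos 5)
Your proof is correct and follows essentially the same route as the paper's: the paper likewise writes both projection lengths as $\norm{\hat{y}-\hat{\mu}}$ times cosines of angles at $\hat{y}$ (its $\beta$ and $\alpha+\beta$ correspond to your $\phi_a$ and $\phi_b$), uses the half-plane hypothesis to make the two angles add, and uses $\alpha^*(\bar{y})\in(0,1)$ via~\eqref{eq:alpha_star_vec} to force the relevant angle below $\pi/2$, whence the cosine comparison. Your write-up merely makes explicit what the paper leaves to its figure --- the reduction of a general $\bar{y}\in\real^k$ to the plane of $\hat{y}$, $y_A$, $\hat{\mu}$, the sign/reflection normalization, and the two-case monotonicity check --- and, incidentally, it gets right what appears to be a typo in the paper's display, where $\norm{\hat{y}-\hat{\mu}}\cos(\alpha+\beta)$ and $\norm{\hat{y}-\hat{\mu}}\cos(\beta)$ are attached to the wrong projections.
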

\begin{proof}
We refer the reader to Figure~\ref{fig:lem_ybar} for an illustration. Let the smaller angle between~$y_A-\hat{y}$ and $\bar{y}-\hat{y}$ be~$\alpha$ and the angle between $\hat{\mu}-\hat{y}$ and $y_A-\hat{y}$ be $\beta$. Given the second assumption in the lemma, we have:
\begin{align*}
	\|\Proj(\hat{y}-\hat{\mu},y_A-\hat{y})\| &= \|\hat{y}-\hat{\mu}\| \cos(\alpha+\beta)\\
	\intertext{and}
	\|\Proj(\hat{y}-\hat{\mu},\bar{y}-\hat{y})\| &= \|\hat{y}-\hat{\mu}\| \cos(\beta)\,.	
\end{align*}	
Since  $\alpha^*(\bar{y}) \in (0,1)$, from~\eqref{eq:alpha_star_vec} we conclude that $\alpha+\beta \leq \pi/2$, from which the claim is readily derived.	
\end{proof}
\begin{figure}[!t]
\centering
\includegraphics[width =0.6\hsize]{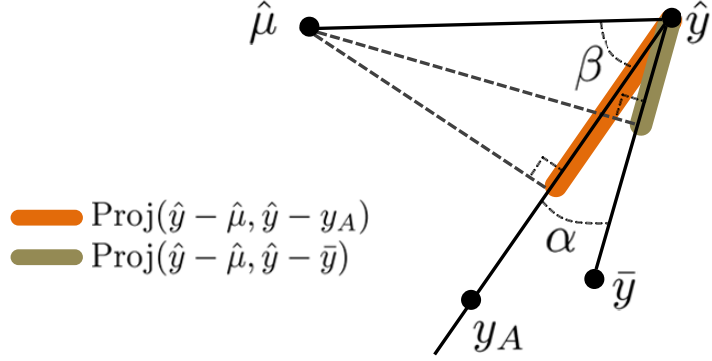}
\caption{Illustrating Lemma~\ref{lem:ybar}. The point $\bar{y}$ lies in the closure of the half plane defined by the line joining $\hat{y}$ and $y_A$ with the side not containing $\hat{\mu}$.}\label{fig:lem_ybar}
\end{figure}

\subsection{Weak Convergence}

We now begin our analysis of a necessary condition assuming that $y_A$ and $\zeta$ are known, which therefore implies that $\hat{\mu}$ is known.

\begin{theorem}[Weak Convergence given $\zeta, y_A$]\label{thm:nec2_mismatch}
For given values of $\zeta, \hat{y}$ and $y_A$, the following hold:
\begin{enumerate}
\item For Algorithm~\ref{algo:IBR} to converge to $(\alpha^* = 0, \bar{y}^* = y_A)$, $y_A$ must satisfy:
\[
(y_A-\hat{\mu})^T(\hat{y}-y_A) \geq 0\,.
\]
\item  Under Assumption~\ref{as:mismatch}, for Algorithm~\ref{algo:IBR} to weakly converge, $y_A$ must satisfy
\[
 (\hat{y}-\hat{\mu})^T(y_A - \hat{y})  < 0\,.
\]
\item Let~$\Psi_{\hat{\mu}}$ be defined as the closure of the half-plane defined by the line joining~$\hat{y}$ and~$y_A$ and which contains the point~$\hat{\mu}$. For Algorithm~\ref{algo:IBR} to yield a converging sequence of mixed $\alpha^* \in (0,1)$, $y_A$ must satisfy either one of the following:\\
\begin{enumerate}
\Item 
\begin{align*}
\norm{y_A-\zeta} \leq \norm{\Proj(\hat{y}-\hat{\mu}, \hat{y}-y_A)}\,,\quad \zeta\in\Psi_{\hat{\mu}}\,,
\end{align*}
\Item
\begin{align*}
\norm{y_A - \zeta} \leq \norm{\hat{y} - \hat{\mu}}\,,\quad \zeta\notin\Psi_{\hat{\mu}}\,.
\end{align*} 
\end{enumerate}
\end{enumerate}
\end{theorem}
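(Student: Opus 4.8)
The plan rests on one structural observation common to all three parts: after the first iteration every attacker action lies on a fixed ray. Substituting $\alpha\in[0,1)$ into the attacker's best response~\eqref{eq:y_star_vec} gives $\bar{y}^*(\alpha)-\zeta=\frac{1}{1-\alpha}z_A$, so each iterate $\bar{y}_i$ with $i\ge 1$ equals $\zeta+t\,z_A$ for some $t=\frac{1}{1-\alpha_i}\ge 1$; that is, it lies on the ray emanating from $\zeta$, passing through $y_A$ at $t=1$, and continuing beyond. I would record this first, since each part then reduces to understanding the sensor's response~\eqref{eq:alpha_star_vec} as $\bar{y}$ travels along this ray.

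For part (i), I would argue by consistency of the two best responses at the limit. Since convergence to $(\alpha^*,\bar{y}^*)=(0,y_A)$ makes the pair a steady state and $\bar{y}^*(0)=y_A$, the sensor's response $\alpha^*(y_A)=0$ must fall in the first branch of~\eqref{eq:alpha_star_vec}, which is precisely $(y_A-\hat{\mu})^T(\hat{y}-y_A)\ge 0$. For part (ii) I would prove the contrapositive. Assume $(\hat{y}-\hat{\mu})^T(y_A-\hat{y})\ge 0$, i.e.\ $y_A$ lies in the closed ``red'' half-space, and consider the affine function $g(t):=(\hat{y}-\hat{\mu})^T(\zeta+t\,z_A-\hat{y})$ obtained by evaluating the red-region test along the ray. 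Then $g(1)=(\hat{y}-\hat{\mu})^T(y_A-\hat{y})\ge 0$, while its slope is $(\hat{y}-\hat{\mu})^T z_A=(\hat{y}-\hat{\mu})^T(y_A-\hat{y})+(\hat{y}-\hat{\mu})^T\hat{z}$; writing $\hat{z}=(\hat{y}-\hat{\mu})+(\hat{\mu}-\zeta)$ and invoking Cauchy--Schwarz together with Lemma~\ref{lem:mismatch} (so $\norm{\zeta-\hat{\mu}}\le\epsilon\norm{\hat{y}-\hat{\mu}}$ with $\epsilon\le 1$) shows $(\hat{y}-\hat{\mu})^T\hat{z}\ge(1-\epsilon)\norm{\hat{y}-\hat{\mu}}^2\ge 0$, hence the slope is nonnegative. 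Thus $g(t)\ge g(1)\ge 0$ for all $t\ge 1$: the entire ray is red. For any non-trivial $\bar{y}_0$ we have $\alpha_1<1$, so $\bar{y}_1$ already lies on this ray and the sensor returns $\alpha_2=1$; the algorithm exits at $\alpha^*=1$ and never weakly converges, giving the stated necessity.

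For part (iii), a converging sequence of mixed values $\alpha_i\to\alpha^*\in(0,1)$ with $\bar{y}_i\to\bar{y}^*$ makes $\bar{y}^*$ a fixed point in the interior (``blue'') regime lying on the ray at parameter $t^*=\frac{1}{1-\alpha^*}>1$, by continuity of the interior branch of~\eqref{eq:alpha_star_vec}. I would split on the position of $\zeta$ relative to the line through $\hat{y}$ and $y_A$. When $\zeta\in\Psi_{\hat{\mu}}$, the point $y_A$ sits on that line and $t^*>1$ pushes $\bar{y}^*$ strictly onto the side \emph{not} containing $\hat{\mu}$, so Lemma~\ref{lem:ybar} applies at $\bar{y}=\bar{y}^*$ and yields $\norm{\Proj(\hat{y}-\hat{\mu},\hat{y}-y_A)}\ge\norm{\Proj(\hat{y}-\hat{\mu},\bar{y}^*-\hat{y})}$. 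Using the ray relation $\bar{y}^*-\hat{y}=\frac{1}{1-\alpha^*}z_A-\hat{z}$ together with the interior fixed-point identity $(\hat{y}-\hat{\mu})^T(\bar{y}^*-\hat{y})=-(1-\alpha^*)\norm{\bar{y}^*-\hat{y}}^2$, the right-hand projection can be rewritten in closed form and compared against $\norm{y_A-\zeta}=\norm{z_A}$, which is expected to produce bound (a). When $\zeta\notin\Psi_{\hat{\mu}}$ the fixed point lands on the $\hat{\mu}$-side, Lemma~\ref{lem:ybar} no longer applies, and I would instead bound $\norm{z_A}$ directly by the distance $\norm{\hat{y}-\hat{\mu}}$ from $\hat{y}$ to the projected mean, giving (b).

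The main obstacle is the final geometric conversion in part (iii): turning the projection inequality delivered by Lemma~\ref{lem:ybar} into the clean distance bound on $\norm{y_A-\zeta}$, and cleanly certifying which side of the line $\hat{y}y_A$ the limiting fixed point falls on so that the correct bound (a) or (b) is invoked. Parts (i) and (ii) are short and essentially algebraic; the genuine content lies in this projection/trigonometric bookkeeping around the interior fixed point, where the identity for $(\hat{y}-\hat{\mu})^T(\bar{y}^*-\hat{y})$ and the ray parametrization must be combined just so.
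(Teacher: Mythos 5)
Your parts (i) and (ii) are correct and essentially the paper's argument: part (i) is the same steady-state consistency check, and part (ii), though you phrase it as a contrapositive with a monotone affine function $g(t)$ along the ray $\zeta + t\,z_A$, uses exactly the paper's ingredients — the slope $(\hat{y}-\hat{\mu})^T z_A=(\hat{y}-\hat{\mu})^T(y_A-\zeta)$ is precisely the quantity the paper shows nonnegative via Lemma~\ref{lem:mismatch}, and your ``entire ray is red'' conclusion is the paper's contradiction that $\bar{y}_1$ lands in the red region, repackaged. Your ray-parametrization observation ($\bar{y}_i-\zeta=\frac{1}{1-\alpha_i}z_A$, $t\ge 1$) is also the paper's equation~\eqref{eq:init1_mismatch} in disguise.

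Part (iii) is where there is a genuine gap, and it is not the ``bookkeeping'' you flag but a step that provably fails. You work statically at the limiting fixed point $\bar{y}^*$, hoping that Lemma~\ref{lem:ybar} plus the identity $(\hat{y}-\hat{\mu})^T(\bar{y}^*-\hat{y})=-(1-\alpha^*)\norm{\bar{y}^*-\hat{y}}^2$ yields bound (a). Unwinding your two ingredients gives
\begin{equation*}
\norm{\Proj(\hat{y}-\hat{\mu},\hat{y}-y_A)} \;\ge\; \norm{\Proj(\hat{y}-\hat{\mu},\bar{y}^*-\hat{y})} \;=\; (1-\alpha^*)\norm{\bar{y}^*-\hat{y}},
\end{equation*}
whereas the ray relation gives $\norm{z_A}=(1-\alpha^*)\norm{\bar{y}^*-\zeta}$; so your chain delivers (a) only if $\norm{\bar{y}^*-\zeta}\le\norm{\bar{y}^*-\hat{y}}$, which is false in general even under Assumption~\ref{as:mismatch}. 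Concretely, with $\hat{\mu}=\mu=(0,0)$, $\hat{y}=(1,0)$, $\bar{y}^*=(0.9,1)$ (a valid mixed fixed point with $\alpha^*\approx 0.9$), and $\zeta=(0.45,-0.1)$ (so $\norm{\zeta-\mu}\approx 0.46\le\frac{1}{2}\norm{\hat{y}-\mu}$), one has $\norm{\bar{y}^*-\zeta}\approx 1.19 > 1.005 \approx \norm{\bar{y}^*-\hat{y}}$; condition (a) still holds there, but not by your route. The same defect hits your case (b), where you give no mechanism at all for converting $\norm{\Proj(\hat{y}-\hat{\mu},\hat{y}-\bar{y}^*)}\le\norm{\hat{y}-\hat{\mu}}$ into a bound on $\norm{z_A}$. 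The paper escapes this precisely by \emph{not} arguing at the fixed point alone: it runs the estimate along the whole orbit, combining the recursion $\norm{\bar{y}_i-\zeta}=\norm{\hat{y}-\bar{y}_{i-1}}\,\norm{y_A-\zeta}/\norm{\Proj(\hat{y}-\hat{\mu},\hat{y}-\bar{y}_{i-1})}$ with the triangle inequality $\norm{\hat{y}-\bar{y}_{i-1}}\ge\norm{\bar{y}_{i-1}-\zeta}-\norm{\hat{y}-\zeta}$, obtaining a lower-bounding \emph{linear} recursion in $\norm{\bar{y}_i-\zeta}$ whose growth coefficient must be at most one for the sequence to stay bounded; the additive offset $\norm{\hat{y}-\zeta}$ is exactly what tolerates the per-point ratio exceeding one while still forcing conditions (a)/(b). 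To repair your proof you must either adopt this dynamic recursion argument or find a genuinely new static inequality at the fixed point that exploits Assumption~\ref{as:mismatch} — the two ingredients you name cannot be ``combined just so.''
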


\begin{proof}
\begin{enumerate}	
\item Algorithm~\ref{algo:IBR} results into a steady-state value $(\alpha^* = 0, \bar{y}^* = y_A)$ only if (cf.~\eqref{eq:alpha_star_vec_1} and  \eqref{eq:y_star_vec}), 
\[
(y_A - \mu)^T(\hat{y} - y_A) \geq 0 \Leftrightarrow (y_A - \hat{\mu})^T(\hat{y} - y_A) \geq 0\,,
\]
since $\hat{\mu}$ is the orthogonal projection of $\mu$ on to the plane containing $\zeta, y_A$ and $\hat{y}$. This proves the first case.

\item Suppose that Algorithm~\ref{algo:IBR} converges to an $\alpha^*\in [0,1)$ from some arbitrary, non-trivial initial condition $\bar{y}_0$. If~$\bar{y}_0$ is in the green region  ($\alpha_0 = 0$), then after one iteration of Algorithm~\ref{algo:IBR}, we obtain $\bar{y}_1 = y_A$. If~$y_A$ is also in the green region (i.e., corresponding to $\alpha_1 = 0$), then the algorithm converges to an equilibrium with $\alpha^* = 0$ and case i) applies. On the other hand, if~$y_A$ is in the blue region ($\alpha \in (0,1)$), then it follows that $\bar{y}_1 = y_A$, and thus~$\bar{y}_1$ lies in the blue region ($\alpha \in (0,1)$). Therefore, we can assume that without any loss of generality,~$\bar{y}_0$ is in the blue region. From~\eqref{eq:alpha_star_vec}, it needs to hold that
\[
(\hat{y} - \hat{\mu})^T(\hat{y} - \bar{y}_0) > 0\,.
\]
Now, substituting~\eqref{eq:y_star_vec} into~\eqref{eq:alpha_star_vec} and on following the same steps that led to~\eqref{eq:zbar_mismatch}, we have that after one iteration of Algorithm~\ref{algo:IBR}, $\bar{y}_1$ is such that the vector $\bar{y}_0 - \zeta$ is a \emph{positive scalar multiple} of the vector $y_A - \zeta$. More precisely, we have
\[
\bar{y}_1 - \zeta = \frac{\norm{\hat{y} - \bar{y}_0}^2}{(\hat{y} -\hat{ \mu})^T(\hat{y} - \bar{y}_0)}(y_A - \zeta)\,.
\]
Now, observe that the ratio
\begin{equation}\label{eq:init1_mismatch}
\frac{\norm{\hat{y} - \bar{y}_0}^2}{(\hat{y} -\hat{ \mu})^T(\hat{y} - \bar{y}_0)} = \frac{\norm{\hat{y}-\bar{y}_0}}{\norm{\Proj(\hat{y}-\hat{\mu}, \hat{y} - \bar{y}_0)}} > 1\,,
\end{equation}
which implies that for any non-trivial $\bar{y}_0$, $\bar{y}_1 - \zeta$ is a positive scalar multiple \emph{greater than unity} of $y_A - \zeta$. 

Let us assume, for the sake of arriving at a contradiction, that $0 \leq (\hat{y} - \hat{\mu})^T(y_A-\hat{y})$, namely that the condition of case ii) does not hold. Therefore, we can write
\begin{align*}
0 \leq (\hat{y} - \hat{\mu})^T(y_A - \hat{y})  \Rightarrow &0 \leq (\hat{y} - \hat{\mu})^T(y_A - \zeta + \zeta - \hat{\mu} + \hat{\mu}- \hat{y}) \\
\Rightarrow &\norm{\hat{y} - \hat{\mu}}^2 - \norm{\hat{y} - \hat{\mu}}\norm{\zeta - \hat{\mu}}\cos\psi \leq (\hat{y} - \hat{\mu})^T(y_A - \zeta)\,,
\end{align*}
where $\psi$ is the smaller angle between the vectors $\hat{y} - \hat{\mu}$ and $\zeta - \hat{\mu}$. Since Assumption~\ref{as:mismatch} holds, Lemma~\ref{lem:mismatch} holds. Therefore, the left hand side of the above inequality is non-negative, and thus,
\begin{equation}\label{eq:y_1}
0 \leq (\hat{y} - \hat{\mu})^T(y_A - \zeta) \Rightarrow 0 \leq (\hat{y} - \hat{\mu})^T(\bar{y}_1-\zeta),
\end{equation}
since $\bar{y}_1 - \zeta$ is a positive scalar multiple of $y_A - \zeta$. 

Now, from~\eqref{eq:init1_mismatch}, we can write 
\begin{align*}
	0 &\leq  (\hat{y} - \hat{\mu})^T(y_A-\hat{y}) = (\hat{y} - \hat{\mu})^T \left(\frac{1}{r}(\bar{y}_1-\zeta) \vphantom{\frac{1}{r}} + (\zeta-\hat{y})\right) \leq (\hat{y} - \hat{\mu})^T(\bar{y}_1 - \hat{y})\,,
\end{align*}
where the second inequality follows from~\eqref{eq:y_1} and from the fact that $r = \norm{\hat{y} - \bar{y}_0}^2/((\hat{y} -\hat{ \mu})^T(\hat{y} - \bar{y}_0)) \geq 1$. This however implies, from~\eqref{eq:alpha_star_vec}, that $\bar{y}_1$ is in the red region ($\alpha_1 = 1$). But this is not possible as we already showed that $\bar{y}_1$ is in the blue region ($\alpha = (0,1)$). This proves the second case. 


\item From the previous case, without loss of generality, we can assume that $(\bar{y}_0 - \zeta)$ is a positive scalar multiple of $y_A - \zeta$, and that $\bar{y}_0$ is in the blue region (i.e., corresponding to $\alpha \in (0,1)$).  
\begin{figure}[!t]
\centering
\includegraphics[width=0.6\hsize]{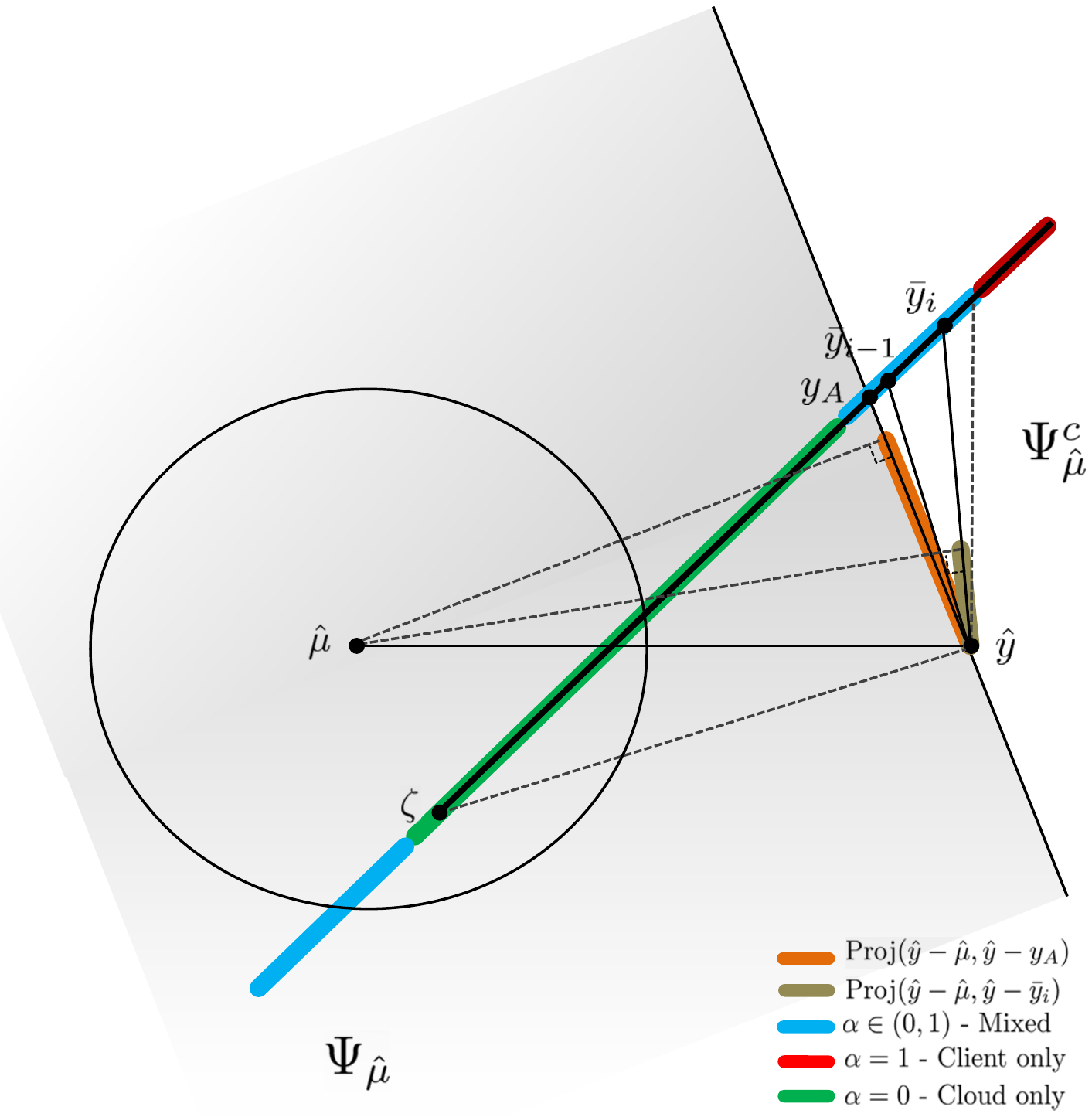}
\caption{Illustrating one possible scenario in the proof of Theorem~\ref{thm:nec2_mismatch}. In this scenario, the points $\zeta$ and $\hat{\mu}$ lie on the same side of the line joining $y_A$ and $\hat{y}$, i.e., the requirement in the third case in the statement of Theorem~\ref{thm:nec2_mismatch} is satisfied, which leads to inequality~\eqref{eq:proj_y_i_mismatch} to hold. In gray, we indicate~$\Psi_{\hat{\mu}}$, the closure of the half-plane defined by the line joining~$\hat{y}$ and~$y_A$ and which contains the point~$\hat{\mu}$.}\label{fig:case2_mismatch}
\label{fig:weak_conv_figure}
\end{figure}

Suppose that for every $i \geq 0$, $\bar{y}_i$ is in the blue region, i.e., every $\alpha_i \in (0,1)$. Then consider the recursion,
\begin{align}\label{eq:yi_mismatch}
\norm{\bar{y}_i - \zeta} = \frac{\norm{\hat{y} - \bar{y}_{i-1}}^2 \norm{y_A - \zeta}}{\abs{(\hat{y} - \hat{\mu})^T(\hat{y} - \bar{y}_{i-1})}} 
\Leftrightarrow \norm{\bar{y}_i - \zeta} = \frac{\norm{\hat{y} - \bar{y}_{i-1}}\norm{y_A - \zeta}}{\norm{\Proj(\hat{y}-\hat{\mu}, \hat{y} - \bar{y}_{i-1})}}\,,
\end{align}
 Note that~\eqref{eq:yi_mismatch} may be viewed as a \emph{discrete-time dynamical system} in $\bar{y}$. Replacing $\bar{y}_0$ by $\bar{y}_{i-1}$ in~\eqref{eq:init1_mismatch}, we have that for every $i \geq 1$, and for any $\bar{y}_0$, such that $\bar{y}_0 - \zeta$ is a positive scalar multiple of $y_A - \zeta$,
\begin{equation}\label{eq:init2_mismatch}
\norm{\bar{y}_{i-1}-\zeta} > \norm{y_A - \zeta}\,.
\end{equation}
We need now to distinguish two cases. 
\item[iii.a] If~$\zeta\in\Psi_{\hat{\mu}}$, then~\eqref{eq:init2_mismatch} implies that for every~$i \geq 1$, the point~$\bar{y}_{i-1}\in \Psi^c_{\hat{\mu}}$, namely the complement of $\Psi_{\hat{\mu}}$. We refer to Figure~\ref{fig:weak_conv_figure} for a geometric interpretation of the proof. Additionally, we have that $\alpha^*(\bar{y}_{i-1}) \in (0,1)$. Thus, applying Lemma~\ref{lem:ybar}, we conclude that for every $i \geq 1$,
\begin{equation}\label{eq:proj_y_i_mismatch}
\norm{\Proj(\hat{y} - \hat{\mu}, \hat{y} - \bar{y}_{i-1})} \leq \norm{\Proj(\hat{y} - \hat{\mu}, \hat{y} - y_A)}\,.
\end{equation}
Further, from applying the triangle inequality to the three points $\bar{y}_{i-1}, \hat{y}$ and $\zeta$, we have
\begin{equation}\label{eq:sinerule_mismatch}
\norm{\hat{y} - \bar{y}_{i-1}} \geq \norm{\bar{y}_{i-1} - \zeta} - \norm{\hat{y} - \zeta}\,.
\end{equation}
Combining this together with \eqref{eq:proj_y_i_mismatch} and \eqref{eq:yi_mismatch} yields
\begin{align*}
\norm{\bar{y}_i - \zeta} &\geq \frac{\norm{y_A - \zeta}(\norm{\bar{y}_{i-1} - \zeta} -\norm{\hat{y} - \zeta})}{\norm{\Proj(\hat{y} - \hat{\mu}, \hat{y} - y_A)}}\,,
\end{align*}
which is a linear system in the quantity $\norm{\bar{y}_i - \zeta}$. This implies that Algorithm~\ref{algo:IBR} will output a converging sequence of mixed $\alpha^*$'s only if 
\[
\norm{y_A-\zeta} \leq \norm{\Proj(\hat{y}-\hat{\mu}, \hat{y}-y_A)}\,.
\]
\item[iii.b] If $\zeta \in \Psi^c_{\hat{\mu}}$, then we can apply the following upper bound
\[
\norm{\Proj(\hat{y} - \hat{\mu}, \hat{y} - \bar{y}_i)} \leq \norm{\hat{y}-\hat{\mu}}\,,
\]
which follows from the fact that the length of the projection of a vector~$x$ onto any another vector can never exceed the length of~$x$ itself. Following the same steps as in the previous case, we conclude that the sequence $\norm{\bar{y}_i - \zeta}$ will converge only if 
\[
\norm{y_A-\zeta} \leq \norm{\hat{y}-\hat{\mu}}\,.
\]
This proves the iii) case.
\end{enumerate}
\vspace*{-0.5cm}
\end{proof}
We numerically verify this result by studying the region of divergence for Algorithm~\ref{algo:IBR} in two dimensions. In the planar case, the value of the mean $\mu = \hat{\mu} = (0,0)$, and the sensor's value $\hat{y} = (0.8,0)$. The attacker's mean, $\zeta = (0.3, -0.2)$. For every value of $y_A$ on a grid in the neighborhood of $\hat{\mu}$, we ran Algorithm~\ref{algo:IBR} for a set of different initial values $\bar{y}_0$. The results are summarized in Figure~\ref{fig:convergence_mismatch}. If the algorithm converged\footnote{The convergence condition in Algorithm~\ref{algo:IBR} was approximated by running the while loop for a sufficiently large number of iterations. In our simulations, we used 100 steps.} to an $\alpha^* \in [0,1)$ for some choice of $\bar{y}_0$, then the corresponding point $y_A$ is \sdb{shown as a (green) dot}. Otherwise, it is \sdb{shown as a (red) cross}. The analytically derived necessary conditions from Theorem~\ref{thm:nec2_mismatch} is shown as a black dashed contour. 

\begin{figure}[!t]
\centering
\includegraphics[width = 0.6\columnwidth]{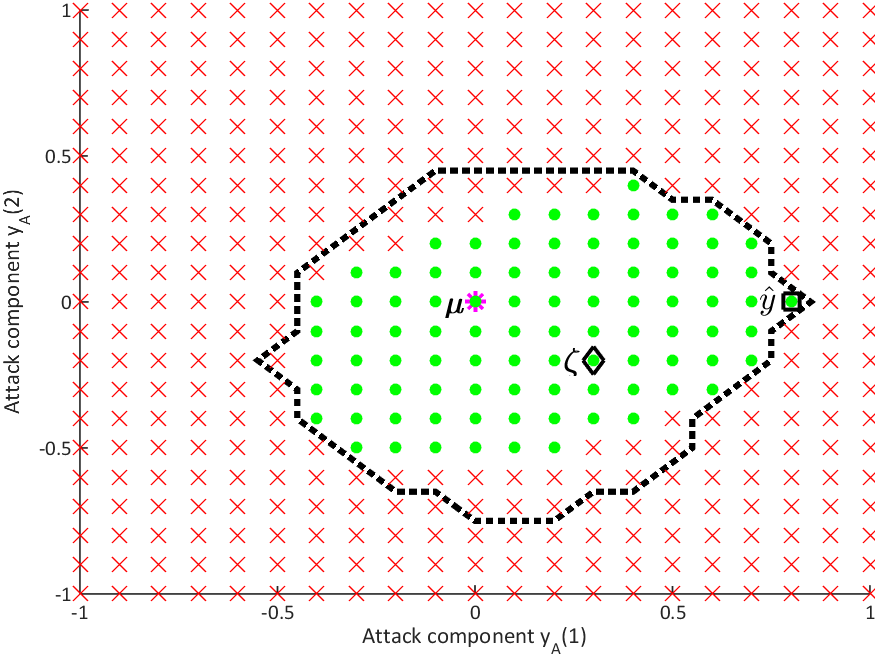}
\caption{Numerically generated plot to study the region of divergence for Algorithm~\ref{algo:IBR} in two dimensions. The value of the mean $\hat{\mu} = (0,0)$ \as{(shown as a (magenta) star)}, the attacker's mean $\zeta = (0.3,-0.2)$ \as{(shown as a (black) diamond)} and the sensor's value $\hat{y} = (0.8,0)$\as{, shown as a ( black) square}. For every value of~\as{$y_A=[y_A(1),y_A(2)]$} on a grid in the neighborhood of~$\mu$, we run Algorithm~\ref{algo:IBR} for a set of different initial values~$\bar{y}_0$. If the algorithm converges to an $\alpha^* \in [0,1)$ for some choice of~$\bar{y}_0$, then the corresponding point~$y_A$ is \as{shown as a (green) dot}. Otherwise, it is \as{shown as a (red) cross}. The analytically derived necessary conditions from Theorem~\ref{thm:nec2_mismatch} is shown as a black dashed contour. }\label{fig:convergence_mismatch}
\end{figure}

We now revisit the fact that the parameter $\zeta$ is not known to the sensor. Suppose that $\zeta \in \mathcal{C}$, where~$\mathcal{C}$ is a set known to the sensor and which satisfies Assumption~\ref{as:mismatch}. Define the set $N_\zeta$ as the set of all points $x\in \real^k$ which satisfy the necessary conditions from Theorem~\ref{thm:nec2_mismatch}. Then, the following result holds.
\begin{corollary}[Necessary Condition]\label{cor:nec}
Suppose that~$\mathcal{C}$ is a set of all $\zeta \in \real^k$ that satisfies Assumption~\ref{as:mismatch}. Then, a necessary condition for weak convergence of Algorithm~\ref{algo:IBR} is that
$$
 y_A \in \bigcup_{\zeta \in \mathcal{C}} N_{\zeta}\,.
$$
\end{corollary}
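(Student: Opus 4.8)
The plan is to treat the corollary as a direct logical consequence of the per-$\zeta$ necessary conditions in Theorem~\ref{thm:nec2_mismatch}, combined with the existential quantifier over $\zeta$ that is built into the definition of weak convergence. There is nothing new to compute here; the entire argument is a careful unpacking of quantifiers, so the ``proof'' is really just the bookkeeping of which parameters are fixed and which are chosen.

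First I would assume that Algorithm~\ref{algo:IBR} possesses the weak convergence property and unfold Definition~\ref{def:weak_conv}. By that definition there exist \emph{some} non-trivial initial condition $\bar{y}_0$ and \emph{some} values of the means $\mu$ and $\zeta$ for which the algorithm outputs $\alpha^* \in [0,1)$. Since $\mathcal{C}$ is by hypothesis exactly the set of admissible $\zeta$ compatible with Assumption~\ref{as:mismatch}, the particular $\zeta$ realizing this convergence lies in $\mathcal{C}$; I would fix it and name it $\zeta_0$.

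Next I would apply Theorem~\ref{thm:nec2_mismatch}, which is stated for an arbitrary given value of $\zeta$. Because the algorithm converges to some $\alpha^* \in [0,1)$ for the value $\zeta_0$, the point $y_A$ must satisfy the necessary conditions of that theorem relative to $\zeta_0$ — the appropriate case among (i)--(iii), depending on whether the limiting $\alpha^*$ equals $0$ or is mixed. By the very definition of $N_{\zeta_0}$ as the set of points satisfying precisely those necessary conditions, this is exactly the assertion $y_A \in N_{\zeta_0}$. Since $\zeta_0 \in \mathcal{C}$, it follows immediately that $y_A \in \bigcup_{\zeta \in \mathcal{C}} N_\zeta$, which is the claim.

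The one point requiring care — and the only genuine ``obstacle,'' albeit a conceptual rather than a computational one — is the direction of the quantifiers. Weak convergence asserts convergence for \emph{some} value of $\zeta$, whereas Theorem~\ref{thm:nec2_mismatch} supplies a necessary condition for \emph{each fixed} $\zeta$. One must resist the temptation to intersect the sets $N_\zeta$: because the sensor does not know which $\zeta_0 \in \mathcal{C}$ actually produced convergence and can only certify that it lies somewhere in $\mathcal{C}$, the correct and weakest defensible necessary condition is membership in the \emph{union} $\bigcup_{\zeta \in \mathcal{C}} N_\zeta$, not the intersection. I would state this explicitly to justify why $\bigcup$ rather than $\bigcap$ appears in the result.
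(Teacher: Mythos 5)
Your proposal is correct and follows essentially the same argument as the paper's proof: fix the witness $\zeta_0 \in \mathcal{C}$ supplied by the definition of weak convergence, apply the per-$\zeta$ necessary conditions of Theorem~\ref{thm:nec2_mismatch} to conclude $y_A \in N_{\zeta_0}$, and hence $y_A \in \bigcup_{\zeta \in \mathcal{C}} N_{\zeta}$. Your added remark on why the union (rather than the intersection) is the correct aggregation is a sound clarification of the quantifier structure, but it does not change the route.
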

\begin{proof}
Suppose that Algorithm~\ref{algo:IBR} weakly converges, i.e., for some $\bar{\zeta} \in \mathcal{C}$ and some non-trivial initial condition $\bar{y}_0$, $\alpha^* \in [0,1)$. Then, $y_A$ must satisfy the conditions in Theorem~\ref{thm:nec2_mismatch}. Therefore, $y_A \in N_{\bar{\zeta}}$, and hence, $y_A \in \bigcup_{\zeta \in \mathcal{C}} N_{\zeta}$.
\end{proof}

In Figure~\ref{fig:necfig} we plot the analytic condition from Corollary~\ref{cor:nec} for different sets~$\mathcal{C}$ in $\real^2$. Since this example is planar, $\hat{\mu} = \mu$. In particular,  let the value of the mean $\mu = (0,0)$, the sensor's value $\hat{y} = (1,0)$, and the set~$\mathcal{C}$ be different circles of increasing radii around $\mu$. Figure~\ref{fig:necfig} shows how the set $\bigcup_{\zeta \in \mathcal{C}} N_{\zeta}$ evolves with increasing radius,~$\delta$, of~$\mathcal{C}$. As is expected, the set computed for a higher values of the radius contains the set computed for a lower one. In other words, for a higher level of uncertainty about the attacker's mean $\zeta$, the necessary condition becomes more conservative.

\begin{figure}[!t]
\centering
\includegraphics[width =0.6\columnwidth]{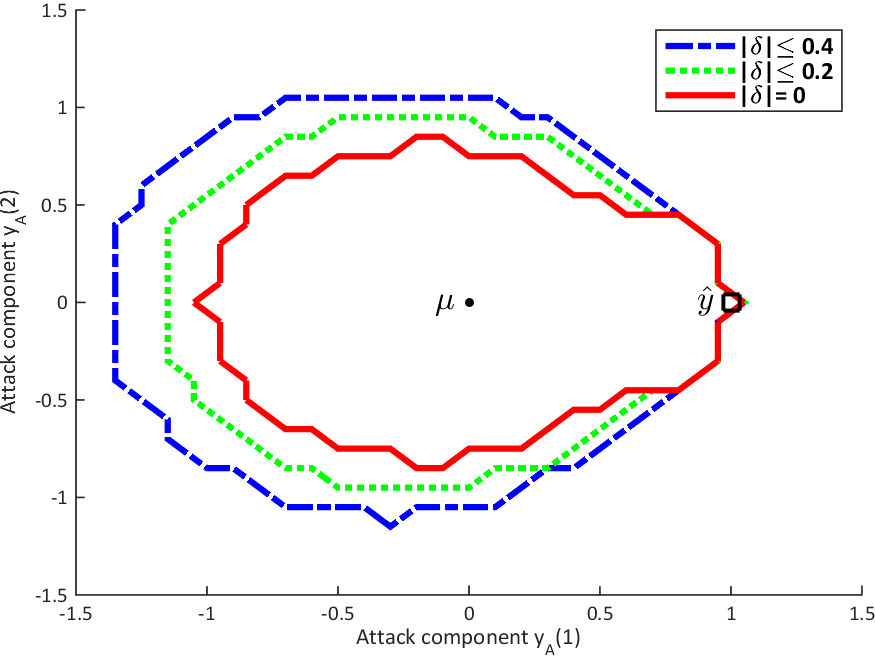}
\caption{Plot of how the analytic necessary condition from Corollary~\ref{cor:nec} evolves for increasing radii of the set~$\mathcal{C}$ which contains $\zeta$. This plot has been numerically generated by sampling $100$ points uniformly randomly out of circles~$\mathcal{C}$ of radii~$\delta$ equal to zero \as{(solid (red) line)}, $0.2$ \as{(dotted (green) line)}, and $0.4$ \as{(dashed (blue) line)}. \as{In this figure, we show $\hat{y}$ as a (black) square and $\mu$ as a (black) dot}.}\label{fig:necfig}
\end{figure}


\subsection{Strong Convergence}

The next result establishes a sufficient condition for strong convergence.

\begin{theorem}[Strong Convergence given $\zeta, y_A$]\label{thm:suff_mismatch}
For given values of $\zeta, \hat{y}$ and $y_A$, Algorithm~\ref{algo:IBR} possesses strong convergence if
\begin{enumerate}
\Item \begin{align}
(y_A - \zeta)^T(\hat{y} - \hat{\mu}) \leq 0\,, \text{ and} \label{eq:suf1_mismatch}
\end{align}
\Item \begin{align}
\norm{y_A - \zeta} \leq \min(&\norm{\Proj(\hat{y}-\hat{\mu}, y_A-\zeta)}, \norm{\Proj(\hat{y}-\hat{\mu}, y_A-\hat{y})})\,.\label{eq:suf2_mismatch}
\end{align}
\end{enumerate}
Additionally, condition~\eqref{eq:suf1_mismatch} is also necessary for strong convergence.
\end{theorem}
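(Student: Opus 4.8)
The plan is to collapse the IBR iteration to a scalar recursion along the ray emanating from $\zeta$ in the direction of $y_A-\zeta$, and then study that one-dimensional rational map. First I would note, exactly as in the proof of Theorem~\ref{thm:nec2_mismatch}, that any non-trivial $\bar{y}_0$ is either green or blue (a red $\bar{y}_0$ exits at the first iteration and is trivial), so after one step $\bar{y}_1-\zeta$ is a positive multiple of $y_A-\zeta$; hence for every $i\ge 1$ the iterate lies on the ray $\bar{y}=\zeta+s\hat{u}$ with $\hat{u}:=(y_A-\zeta)/\norm{y_A-\zeta}$ and $s\ge 0$. Writing $s_i:=\norm{\bar{y}_i-\zeta}$, $L:=\norm{y_A-\zeta}$, and $v:=\hat{y}-\hat{\mu}$, the update~\eqref{eq:yi_mismatch} becomes $s_i=T(s_{i-1})$ with $T(s)=L\,g(s)/h(s)$, where $g(s)=\norm{\hat{z}-s\hat{u}}^2=\norm{\hat{y}-\bar{y}}^2$ and $h(s)=v^T(\hat{z}-s\hat{u})=(\hat{y}-\hat{\mu})^T(\hat{y}-\bar{y})>0$ on the blue region. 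The fixed points of $T$ correspond to the mixed equilibria of Theorem~\ref{thm:eqbm}(ii).

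Next I would translate the hypotheses into this scalar picture. Condition~\eqref{eq:suf1_mismatch} reads $v^T\hat{u}\le 0$, so $h(s)=v^T\hat{z}-(v^T\hat{u})s$ is non-decreasing and remains positive for all large $s$; geometrically the blue region is unbounded along the ray and large $s$ is never red, so the algorithm cannot escape to $\alpha=1$ by running off to infinity. The first bound in~\eqref{eq:suf2_mismatch} is equivalent to $L\le\norm{\Proj(v,\,y_A-\zeta)}=|v^T\hat{u}|$, and I would use it to control the increment: the numerator of $T(s)-s$, namely $P(s):=L\,g(s)-s\,h(s)$, has leading coefficient $L-|v^T\hat{u}|\le 0$, so $P$ is a downward parabola and $T(s)<s$ for all sufficiently large $s$. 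This is the mechanism that forbids blow-up and makes the larger root $s_+$ of $P$ an attracting fixed point while the smaller root $s_-$ is repelling. The second bound in~\eqref{eq:suf2_mismatch} involves $\norm{\Proj(v,\,y_A-\hat{y})}$, which is precisely the denominator factor $\norm{\Proj(v,\hat{y}-\bar{y})}$ evaluated at $\bar{y}=y_A$ (that is, $s=L$); I would use it, together with a monotonicity estimate of $s\mapsto h(s)/\sqrt{g(s)}$, to lower-bound the denominator over the whole relevant interval, guaranteeing that $s_\pm$ are real and that every non-trivial start lies in the basin of $s_+$ or is pushed into the green region. Assembling these, for every non-trivial $\bar{y}_0$ the sequence either converges to the blue point $s_+$ (so $\alpha^*\in(0,1)$) or enters the green region (so $\alpha^*=0$); in both cases $\alpha^*\in[0,1)$, which is strong convergence.

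For the necessity of~\eqref{eq:suf1_mismatch} I would argue the contrapositive. If $v^T\hat{u}>0$, then $h$ is strictly decreasing and vanishes at $s^\circ:=v^T\hat{z}/(v^T\hat{u})$, with $h<0$ (the red region) for $s>s^\circ$, and moreover $T(s)\to+\infty$ as $s\uparrow s^\circ$. Choosing a non-trivial $\bar{y}_0$ on the ray just below the pole $s^\circ$ (a blue point) therefore sends $s_1=T(s_0)$ beyond $s^\circ$, i.e.\ $\bar{y}_1$ into the red region, so the subsequent sensor update returns $\alpha=1$ and the algorithm terminates without a limit in $[0,1)$; this single bad initial condition already defeats strong convergence.

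The main obstacle I anticipate is the global-convergence step: showing that for \emph{every} admissible initial condition the scalar iteration lands in the basin of the attracting fixed point $s_+$ (or enters green) and neither escapes to red nor oscillates. Because $T$ is quadratic-over-linear, this demands a careful sign analysis of $P(s)$ combined with the denominator lower bound furnished by the second inequality in~\eqref{eq:suf2_mismatch}; pinning down exactly how that inequality forces real fixed points and a single global basin, and cleanly handling the green/blue region transitions within the logic of Algorithm~\ref{algo:IBR}, is the delicate part.
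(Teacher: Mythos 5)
Your ray-reduction is sound and matches the paper's starting point: after one step every iterate satisfies $\bar{y}_i-\zeta = r_i(y_A-\zeta)$ with $r_i>0$, and your identification $\norm{\Proj(v,y_A-\zeta)}=|v^T\hat{u}|$ (so that the first bound in~\eqref{eq:suf2_mismatch} makes the leading coefficient $L+v^T\hat{u}$ of $P(s)=Lg(s)-sh(s)$ non-positive) is correct. Your necessity argument is also essentially the paper's: the paper likewise picks $\bar{y}_0$ arbitrarily close to the red boundary so that the multiplier in~\eqref{eq:yi_mismatch} blows up and $\bar{y}_1$ lands in the red region after one iteration; your pole $s^\circ = v^T\hat{z}/(v^T\hat{u})$ is just a coordinate version of that construction. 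The genuine gap is in the sufficiency half, and it is exactly the step you flag as ``delicate'': you never establish global convergence, and the mechanism you sketch does not match the actual structure of your own scalar map. Under your sign conditions, $P$ is concave with $P(0)=L\norm{\hat{z}}^2>0$, so its roots satisfy $s_-<0<s_+$: realness is automatic (the second bound in~\eqref{eq:suf2_mismatch} is not needed for that), there is no positive repelling fixed point, and hence no basin dichotomy to manage. More importantly, knowing $T(s)>s$ below $s_+$ and $T(s)<s$ above it does not give convergence of a non-monotone rational map --- the iteration could overshoot, oscillate, or be thrown into the red region from intermediate $s$ --- so without a monotonicity or contraction estimate for $T$ the conclusion ``every non-trivial start converges or turns green'' is unsupported. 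Your proposed fix via ``monotonicity of $h(s)/\sqrt{g(s)}$'' is left unspecified, and it misidentifies the role of the second projection bound.

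The paper closes precisely this step, and by a different and simpler device: it avoids fixed-point dynamics altogether and majorizes the recursion~\eqref{eq:yi_mismatch} by an affine one. The numerator is bounded by the triangle inequality, $\norm{\hat{y}-\bar{y}_{i-1}}\leq\norm{\bar{y}_{i-1}-\zeta}+\norm{\hat{y}-\zeta}$, and the denominator is bounded below \emph{uniformly along the ray} by a two-case geometric argument: if $\zeta\in\Psi_{\hat{\mu}}$ then $\norm{\Proj(v,\hat{y}-\bar{y}_{i-1})}\geq\norm{\Proj(v,y_A-\zeta)}$, while if $\zeta\notin\Psi_{\hat{\mu}}$ then $\norm{\Proj(v,\hat{y}-\bar{y}_{i-1})}\geq\norm{\Proj(v,y_A-\hat{y})}$ --- this case split, with condition~\eqref{eq:suf1_mismatch} guaranteeing the relevant angle lies in $[\pi/2,\pi]$, is where the second projection in~\eqref{eq:suf2_mismatch} actually enters. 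Combining the three bounds yields
\begin{align*}
\norm{\bar{y}_i-\zeta} \leq \frac{\norm{y_A-\zeta}\bigl(\norm{\bar{y}_{i-1}-\zeta}+\norm{\hat{y}-\zeta}\bigr)}{\min\bigl(\norm{\Proj(\hat{y}-\hat{\mu},y_A-\zeta)},\norm{\Proj(\hat{y}-\hat{\mu},y_A-\hat{y})}\bigr)}\,,
\end{align*}
a linear system whose gain is at most one exactly under~\eqref{eq:suf2_mismatch}, which disposes of overshoot, oscillation, and escape to the red region in one stroke. To salvage your route you would need to prove that $T$ is monotone (or a contraction) on the invariant interval and that iterates cannot exit blue except into green --- that is, you would end up reproving an estimate of this affine-majorant type, so the missing step is not a technicality but the substance of the theorem.
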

\begin{proof}
Following the same steps that lead to~\eqref{eq:init2_mismatch} in the proof of Theorem~\ref{thm:nec2_mismatch}, we can assume without any loss of generality that $\bar{y}_0- \zeta$ is a positive scalar multiple (greater than unity) of  $y_A -\zeta$. Recall~\eqref{eq:yi_mismatch}:
\begin{align}\label{eq:again_yi_mismatch}
\norm{\bar{y}_i - \zeta} &= \frac{\norm{\hat{y} - \bar{y}_{i-1}}}{\norm{\Proj(\hat{y}-\hat{\mu}, \hat{y} - \bar{y}_{i-1})}}\norm{y_A - \zeta}\,.
\end{align}
Observe that in the regime given by \eqref{eq:suf1_mismatch}, the angle between the vectors $y_A - \zeta$ and $\hat{y} - \hat{\mu}$ lies in the interval $[\pi/2, \pi]$. In this regime, for every $i \geq 1$, one out of the following two possibilities occurs.

\begin{enumerate}
\Item \begin{equation}\label{eq:proj_ub_mismatch}
\norm{\Proj(\hat{y} - \hat{\mu}, \hat{y} - \bar{y}_{i-1})} \geq \norm{\Proj(\hat{y} - \hat{\mu}, y_A - \zeta)}\,,
\end{equation}
which holds whenever $\zeta$ is contained in~$\Psi_{\hat{\mu}}$\footnote{Recall that we defined $\Psi_{\hat{\mu}}$ to be the closed half-plane defined by the line joining~$\hat{y}$ and~$y_A$ and contains the point~$\hat{\mu}$.}. This can be seen in Figure~\ref{fig:suff_mismatch}. Equality is achieved when $\bar{y}_{i-1}$ has a magnitude equal to infinity.\\
\begin{figure}[!t]
	\centering
	\includegraphics[width=0.6\hsize]{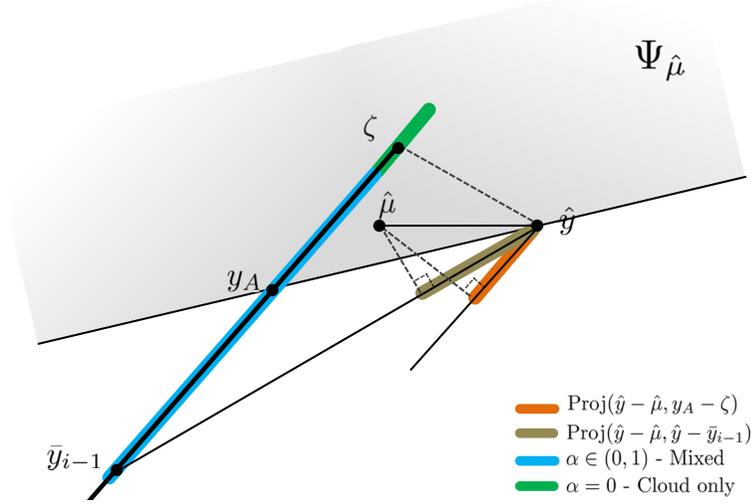}
	\caption{Illustrating first of the two possible configurations: $\zeta \in \Psi_{\hat{\mu}}$ (the closed half-plane defined by the line joining~$\hat{y}$ and~$y_A$ and contains the point~$\hat{\mu}$) in the proof of Theorem~\ref{thm:suff_mismatch}. }\label{fig:suff_mismatch}
\end{figure}
\Item \begin{equation}\label{eq:proj_ub2_mismatch}
\norm{\Proj(\hat{y} - \hat{\mu}, \hat{y} - \bar{y}_{i-1})} \geq \norm{\Proj(\hat{y} - \hat{\mu}, y_A - \hat{y})}\,,
\end{equation}
which holds whenever~$\zeta$ is not contained in the closure of the half plane defined by the line joining $\hat{y}$ and $y_A$ and which contains the point $\hat{\mu}$. This can be seen in Figure~\ref{fig:suff2_mismatch}. Equality is achieved when $\bar{y}_{i-1}$ has a magnitude equal to infinity. 
\begin{figure}[!t]
	\centering
	\includegraphics[width=0.6\hsize]{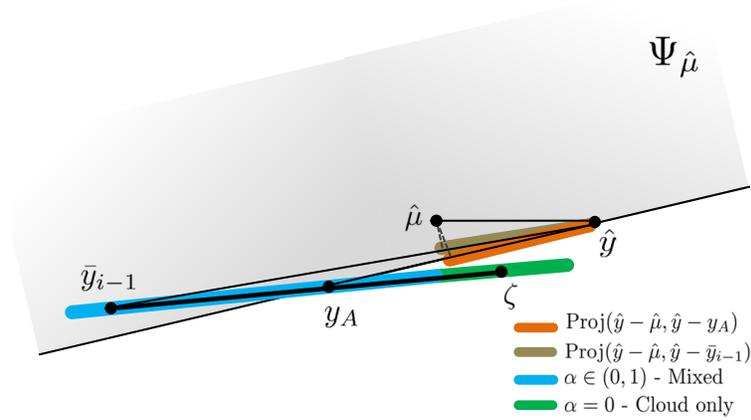}
	\caption{Illustrating the second of the two possible configurations: $\zeta \notin \Psi_{\hat{\mu}}$ (the closed half-plane defined by the line joining~$\hat{y}$ and~$y_A$ and contains the point~$\hat{\mu}$) in the proof of Theorem~\ref{thm:suff_mismatch}. }\label{fig:suff2_mismatch}
\end{figure}
\end{enumerate}
Therefore, we conclude that
\begin{align}\label{eq:lowerbd_mismatch}
&\norm{\Proj(\hat{y} - \hat{\mu}, \hat{y} - \bar{y}_{i-1})} \geq \min(\norm{\Proj(\hat{y}-\hat{\mu}, y_A-\zeta)},\norm{\Proj(\hat{y}-\hat{\mu}, y_A-\hat{y})})\,.
\end{align}
Further, applying the triangle inequality to the set of points~$\zeta$,~$\hat{y}$, and~$\bar{y}_{i-1}$, we obtain
\begin{equation}\label{eq:triangle_suf_mismatch}
\norm{\hat{y} - \bar{y}_{i-1}} \leq \norm{\bar{y}_{i-1} - \zeta} + \norm{\hat{y} - \zeta}\,.
\end{equation}
Combining~\eqref{eq:again_yi_mismatch}, \eqref{eq:lowerbd_mismatch} and \eqref{eq:triangle_suf_mismatch}, we obtain
\begin{align*}
&\norm{\bar{y}_i - \zeta} \leq \frac{\norm{y_A - \zeta} (\norm{\bar{y}_{i-1} - \zeta} + \norm{\hat{y} - \zeta} )}{\min(\norm{\Proj(\hat{y}-\hat{\mu}, y_A-\zeta)}, \norm{\Proj(\hat{y}-\hat{\mu}, y_A-\hat{y})})}\,,
\end{align*}
which converges if condition~\eqref{eq:suf2_mismatch} is satisfied.

We show the necessity of~\eqref{eq:suf1_mismatch} through the following construction. Suppose that condition~\eqref{eq:suf1_mismatch} is violated. Equivalently, suppose that the angle between the vectors $y_A - \zeta$ and $\hat{y} - \hat{\mu}$ is in the interval $[0, \pi/2)$ (we refer the reader to Figure~\ref{fig:case2_mismatch} for an illustration). Now, we can choose a $\bar{y}_0$ to be arbitrarily close to the red region ($\alpha = 1$). The coefficient in front of $\|y_A-\zeta\|$ in ~\eqref{eq:again_yi_mismatch} can be made arbitrarily large. Therefore, in one iteration of the algorithm, $\alpha_1 = 1$. This establishes the necessity of~\eqref{eq:suf1_mismatch}.
\end{proof}

This result is numerically verified in Figure~\ref{fig:sufficiency_mismatch}. In this figure, condition~\eqref{eq:suf1_mismatch} is the vertical line passing through the point $\zeta$ (the black diamond). We can see that there are no green points that lie strictly to the right of this vertical line, thereby numerically verifying the necessity of~\eqref{eq:suf1_mismatch}.

\begin{figure}[!t]
\centering
\includegraphics[width = 0.6\columnwidth]{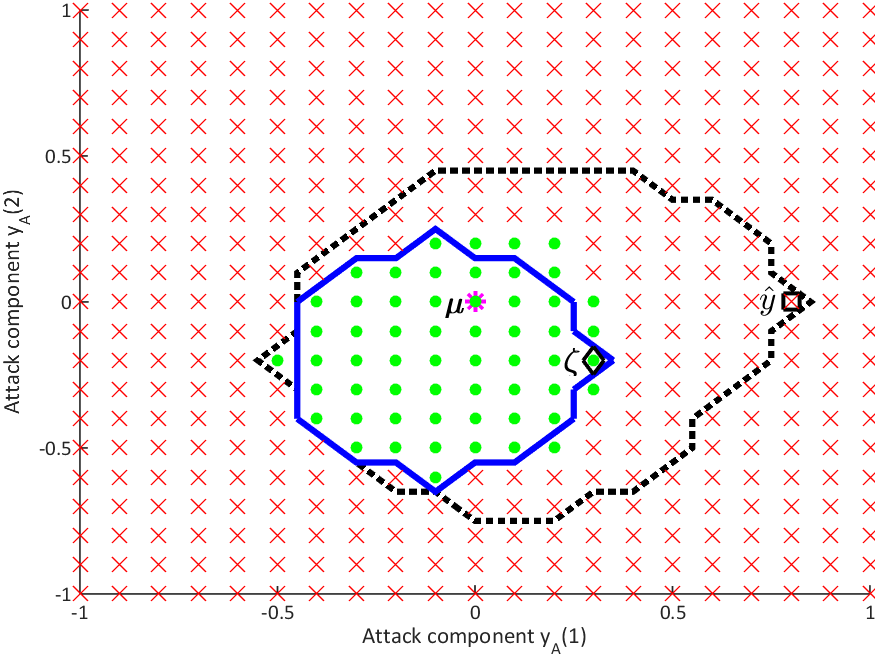}
\caption{Numerically generated plot to study the region of convergence for Algorithm~\ref{algo:IBR} in two dimensions. In the planar case, the value of the mean $\hat{\mu} = \mu = (0,0)$ \as{(shown as a (magenta) star)}, the attacker's mean $\zeta = (0.3, -0.2)$\as{(shown as a (black) diamond)} and the sensor's value $\hat{y} = (0.8,0)$\as{, shown as a (black) square}. For every value of~\as{$y_A=[y_A(1),y_A(2)]$} on a grid in the neighborhood of $\mu$, we run Algorithm~\ref{algo:IBR} for a set of different initial values $\bar{y}_0$. If the algorithm converges to an $\alpha^* \in (0,1)$ for \emph{every} choice of $\bar{y}_0$, then the corresponding point $y_A$ is \as{shown as a (green) dot}. Otherwise, it is \as{shown as a (red) cross}. The analytically derived sufficient condition from Theorem~\ref{thm:suff_mismatch} is shown as a \as{solid (blue) contour}. The analytically derived necessary condition from Theorem~\ref{thm:nec2_mismatch} is shown as a \as{dashed (black) contour}.}\label{fig:sufficiency_mismatch}
\end{figure}

Akin to the necessary condition, we seek a result which does not depend upon the knowledge of $\zeta$. Suppose that $\zeta \in \bar{\mathcal{C}}$, where $\bar{\mathcal{C}}$ is a set known to the sensor. Define the set $S_\zeta$ as the set of all points which satisfy the sufficient conditions from Theorem~\ref{thm:suff_mismatch}. Then, the following result holds.
\begin{corollary}[Sufficient Condition]\label{cor:suf}
A sufficient condition for strong convergence of Algorithm~\ref{algo:IBR} is that
\[
\displaystyle y_A \in \bigcap_{\zeta \in \bar{\mathcal{C}}} S_{\zeta}\,.
\]
\end{corollary}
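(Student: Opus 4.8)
The plan is to prove this as a direct consequence of Theorem~\ref{thm:suff_mismatch}, dual to the argument for Corollary~\ref{cor:nec}: whereas weak convergence is an existential statement over the unknown attacker mean $\zeta$ (hence a union of the per-$\zeta$ necessary sets $N_\zeta$), strong convergence is a universal statement over $\zeta$, and so the natural certificate is the intersection $\bigcap_{\zeta \in \bar{\mathcal{C}}} S_\zeta$ of the per-$\zeta$ sufficient sets. The entire content is therefore a quantifier manipulation: I would show that membership of $y_A$ in this intersection triggers the hypotheses of Theorem~\ref{thm:suff_mismatch} simultaneously for every admissible value of $\zeta$, which is exactly what Definition~\ref{def:strong_conv} demands.

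Concretely, I would fix an arbitrary admissible mean $\zeta \in \bar{\mathcal{C}}$ and an arbitrary non-trivial initial condition $\bar{y}_0$. Since $y_A \in \bigcap_{\zeta \in \bar{\mathcal{C}}} S_\zeta$, in particular $y_A \in S_\zeta$ for the chosen $\zeta$; by the defining property of $S_\zeta$, this means $y_A$ satisfies both sufficient conditions~\eqref{eq:suf1_mismatch} and~\eqref{eq:suf2_mismatch} of Theorem~\ref{thm:suff_mismatch} for this value of $\zeta$. Theorem~\ref{thm:suff_mismatch} then yields that Algorithm~\ref{algo:IBR} possesses strong convergence for this $\zeta$, i.e.\ it outputs some $\alpha^* \in [0,1)$ starting from $\bar{y}_0$. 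Because $\bar{y}_0$ was an arbitrary non-trivial initial condition and $\zeta$ an arbitrary element of $\bar{\mathcal{C}}$, the conclusion holds for every non-trivial $\bar{y}_0$ and every admissible $\zeta$, which is precisely the strong-convergence property of Definition~\ref{def:strong_conv}.

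The one point that requires care, and which I expect to be the only genuine obstacle, is reconciling the quantifier over the mean $\mu$ in Definition~\ref{def:strong_conv} with the fact that Theorem~\ref{thm:suff_mismatch} and the conditions~\eqref{eq:suf1_mismatch}--\eqref{eq:suf2_mismatch} are phrased through the in-plane projection $\hat{\mu}$ rather than $\mu$ itself. I would resolve this by recalling the observation made just before~\eqref{eq:alpha_star_vec}: writing $\mu = \hat{\mu} + \mu^\perp$, the out-of-plane component $\mu^\perp$ is annihilated in every inner product of the form $(\cdot)^T(\hat{y}-\bar{y})$, so only $\hat{\mu}$ enters both the best response~\eqref{eq:alpha_star_vec} and the sufficient conditions. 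Since $\hat{\mu}$ is the orthogonal projection of $\mu$ onto the plane through $\zeta, y_A, \hat{y}$, it is a derived quantity determined by $(\mu,\zeta,y_A,\hat{y})$; as $\zeta$ ranges over $\bar{\mathcal{C}}$ this induces the relevant variation in $\hat{\mu}$, while Assumption~\ref{as:mismatch} together with Lemma~\ref{lem:mismatch} guarantees that the admissible $\mu$ remain consistent with $\zeta \in \bar{\mathcal{C}}$. Hence quantifying over $\zeta \in \bar{\mathcal{C}}$ already subsumes the relevant variation in the mean, and no argument beyond the set-membership chain above is needed.
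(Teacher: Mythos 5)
Your proposal is correct and is essentially the paper's own proof: both arguments reduce the corollary to a quantifier manipulation, noting that $y_A \in \bigcap_{\zeta \in \bar{\mathcal{C}}} S_{\zeta}$ triggers the hypotheses of Theorem~\ref{thm:suff_mismatch} for each fixed $\zeta \in \bar{\mathcal{C}}$, hence yields $\alpha^* \in (0,1)$ for every non-trivial $\bar{y}_0$ and every admissible $\zeta$, which is Definition~\ref{def:strong_conv}. Your closing remark about $\hat{\mu}$ being a derived quantity (with $\mu^\perp$ annihilated in the inner products against $\hat{y}-\bar{y}$) is accurate and a fair point of care, though the paper treats it as already settled by the discussion preceding~\eqref{eq:alpha_star_vec} and omits it from the corollary's proof.
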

\begin{proof}
Suppose that  $y_A \in \bigcap_{\zeta \in \bar{\mathcal{C}}} S_{\zeta}$. From Theorem~\ref{thm:suff_mismatch}, this means that for every $\bar{\zeta} \in \bar{\mathcal{C}}$ and for every non-trivial initial condition $\bar{y}_0$, $\alpha^* \in (0,1)$.  Hence, Algorithm~\ref{algo:IBR} strongly converges.
\end{proof}

We now plot the analytic condition from Corollary~\ref{cor:suf} for different sets $\bar{\mathcal{C}}$. In particular,  let the value of the mean $\hat{\mu} = \mu = (0,0)$ (since this is a planar case), the sensor's value $\hat{y} = (0.8,0)$, and the set $\bar{\mathcal{C}}$ be different circles of increasing radii around $\mu$. Then, Figure~\ref{fig:suffig} shows how the set $\bigcap_{\zeta \in \bar{\mathcal{C}}} S_{\zeta}$ evolves with increasing radius of $\bar{\mathcal{C}}$. As is expected, the set computed for a smaller radius contains the set computed for one with higher radius. Since the sufficient condition is of interest to the attacker, we can see that with higher uncertainty in the mismatch between the means, the set of values of $y_A$ that guarantee convergence of Algorithm~\ref{algo:IBR} shrinks in size, making it increasingly difficult for an attacker to always guarantee convergence of Algorithm~\ref{algo:IBR}.

\begin{figure}[!t]
\centering
\includegraphics[width =0.6\columnwidth]{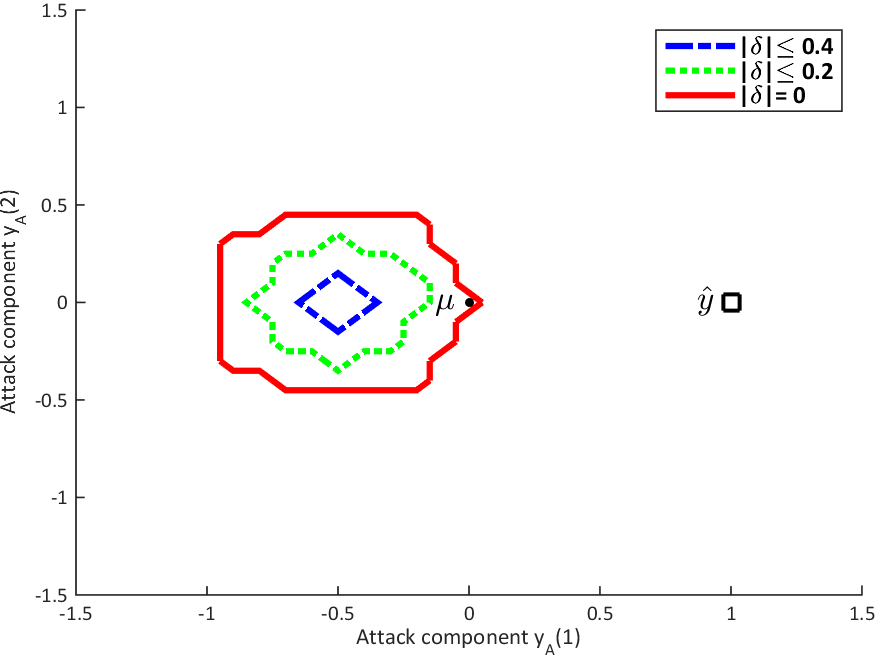}
\caption{Plot of how the analytic necessary condition from Corollary~\ref{cor:suf} evolves for increasing radii of the set~$\bar{\mathcal{C}}$ which contains $\zeta$. This plot has been numerically generated by sampling 100 points uniformly randomly out of circles $\bar{\mathcal{C}}$ of radii equal to zero \as{(solid (red) line)}, $0.2$ \as{(dotted (green) line)}, and $0.4$ \as{(dashed (blue) line)}. \as{In this figure, we show $\hat{y}$ as a (black) square and $\mu$ as a (black) dot}. }\label{fig:suffig}
\end{figure}

\section{A Special Case: Equal Means}\label{sec:equalmeans}
This section addresses the case when the random variables~$y|\hat{y}$ and~$\hat{y}$ have equal means, i.e., $\mu = \zeta$. This scenario may arise in the following situation: the entire multi-sample~$d$ is viewed as multiple independent measurements of a random vector and the trusted sample~$\hat{d}$ drawn by the sensor is not sent to the central computer for computation. This will mean that the random variables $\hat{y}$ and~$y$ are independent with possibly equal expected values (depending on the algorithm $g(.)$ and the sampling distribution for the measurements). However, since~$y$ is computed using many more samples than $\hat{y}$, $y$ is likely to have a lower variance if computed in a non-adversarial setting.

Through a simple substitution of the mismatch parameter $\delta = 0$, it follows that the results from Section~\ref{sec:mismatch} apply directly to this case of equal means. However, the conditions from Theorem~\ref{thm:eqbm} as well as the computation of equilibria still requires complete information of the problem parameters, i.e., $\hat{y}$,~$\mu$ and~$y_A$. So we can study the convergence conditions for Algorithm~\ref{algo:IBR} in this limiting regime, which we now formally present.

\begin{corollary}[Necessary Conditions with $\zeta = \mu$]\label{thm:nec2}
Suppose that $\zeta = \mu$. 
\begin{enumerate}
\item For Algorithm~\ref{algo:IBR} to converge to $\alpha^* = 0$, $\bar{y}^* = y_A$ and~$y_A$ must satisfy
\[
(y_A-\mu)^T(\hat{y}-y_A) \geq 0\,.
\]
\item For Algorithm~\ref{algo:IBR} to weakly converge, $y_A$ must satisfy
\[
  (\hat{y}-\mu)^T(y_A - \hat{y})  < 0\,.
\]
\item For Algorithm~\ref{algo:IBR} to yield a converging sequence of mixed $\alpha^* \in (0,1)$, $y_A$ must satisfy
\[
\norm{y_A-\mu} \leq \norm{\Proj(\hat{y}-\mu, \hat{y}-y_A)}\,.
\]
\end{enumerate}
\end{corollary}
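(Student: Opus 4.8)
The plan is to obtain the corollary as a direct specialization of Theorem~\ref{thm:nec2_mismatch} under the substitution $\zeta = \mu$. The first observation is that when $\mu = \zeta$, the point $\mu$ already lies in the plane spanned by $\zeta$, $y_A$, and $\hat{y}$, since $\zeta$ is one of the three points defining that plane. Consequently its orthogonal projection $\hat{\mu}$ onto that plane coincides with $\mu$ itself, so that $\hat{\mu} = \mu = \zeta$. This single identity is what collapses the more general statement of Theorem~\ref{thm:nec2_mismatch} to the present one, and I would establish it first.

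With $\hat{\mu} = \mu$ in hand, parts (i) and (ii) follow immediately. Substituting $\hat{\mu} = \mu$ into the condition $(y_A - \hat{\mu})^T(\hat{y} - y_A) \geq 0$ of Theorem~\ref{thm:nec2_mismatch}(i) yields $(y_A - \mu)^T(\hat{y} - y_A) \geq 0$, and substituting into $(\hat{y} - \hat{\mu})^T(y_A - \hat{y}) < 0$ of part (ii) yields $(\hat{y} - \mu)^T(y_A - \hat{y}) < 0$. For part (ii) I would also note that Assumption~\ref{as:mismatch} is trivially satisfied when $\delta = 0$, taking $\epsilon = 0$ since $\norm{\zeta - \mu} = 0$, so the hypotheses of Theorem~\ref{thm:nec2_mismatch}(ii) are met and the necessary condition transfers.

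For part (iii), the key point is to argue that only the branch (iii.a) of Theorem~\ref{thm:nec2_mismatch} can occur. Since $\Psi_{\hat{\mu}}$ is by definition the closed half-plane bounded by the line through $\hat{y}$ and $y_A$ that contains $\hat{\mu}$, and since $\zeta = \hat{\mu}$, the point $\zeta$ lies in $\Psi_{\hat{\mu}}$. Hence the case $\zeta \in \Psi_{\hat{\mu}}$ always applies and (iii.b) is vacuous. The condition of (iii.a), namely $\norm{y_A - \zeta} \leq \norm{\Proj(\hat{y} - \hat{\mu}, \hat{y} - y_A)}$, then reduces under $\hat{\mu} = \zeta = \mu$ to $\norm{y_A - \mu} \leq \norm{\Proj(\hat{y} - \mu, \hat{y} - y_A)}$, as claimed.

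The main, and essentially only, obstacle is the geometric bookkeeping in part (iii): one must be careful that $\zeta = \hat{\mu}$ forces membership in $\Psi_{\hat{\mu}}$ even in the degenerate situation where $\hat{\mu}$ happens to lie on the line joining $\hat{y}$ and $y_A$, so that $\zeta$ sits on the boundary of $\Psi_{\hat{\mu}}$; because $\Psi_{\hat{\mu}}$ is taken to be the \emph{closed} half-plane, the boundary is included and branch (iii.a) still applies. Everything else is a routine substitution of $\hat{\mu} = \mu$ into the previously established inequalities, so I would keep the argument short and lean on Theorem~\ref{thm:nec2_mismatch} rather than re-deriving the dynamics.
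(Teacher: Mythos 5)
Your proposal is correct and follows essentially the same route as the paper: the paper's proof likewise observes that when $\zeta = \mu$ Assumption~\ref{as:mismatch} holds trivially, that $\hat{\mu} = \mu$, and that the requirement $\zeta \in \Psi_{\hat{\mu}}$ in the third case of Theorem~\ref{thm:nec2_mismatch} is automatic, then specializes. Your extra remark about the degenerate collinear case, where $\zeta$ sits on the boundary of the \emph{closed} half-plane $\Psi_{\hat{\mu}}$, is a careful touch the paper leaves implicit, but it does not change the argument.
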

\begin{proof}
When $\zeta = \mu$, Assumption~\ref{as:mismatch} is trivially satisfied. Also, we have that $\hat{\mu} = \mu$.  Additionally, the requirement in the third case of Theorem~\ref{thm:nec2_mismatch} is also trivially satisfied. Therefore, applying Theorem~\ref{thm:nec2_mismatch}, we obtain the claim.
\end{proof}

This result provides an improvement over the conditions presented in our previous work~\cite{SDB-AS-CL:14}. We numerically verify this claim by studying the region of divergence for Algorithm~\ref{algo:IBR} in two dimensions. In this simulation, the value of the mean $\mu = (0,0)$, and the sensor's value $\hat{y} = (1,0)$. For every value of $y_A$ on a grid in the neighborhood of $\mu$, we ran Algorithm~\ref{algo:IBR} for a set of different initial values $\bar{y}_0$. The results are summarized in Figure~\ref{fig:convergence}. If the algorithm converged to an $\alpha^* \neq 1$ for some choice of $\bar{y}_0$, then the corresponding point $y_A$ is colored green. Otherwise, it is colored red. The analytically derived necessary conditions from Corollary~\ref{thm:nec2} are shown as a black dashed contour. We can see that the result is fairly tight especially in the regions where $y_A$ makes an angle that is either very close to zero or to $\pi$. We should expect this to be the case since one can show that the results are exact in the scalar case \cite{SDB-AS-CL:14}.
\begin{figure}[!t]
\centering
\includegraphics[width=0.6\hsize]{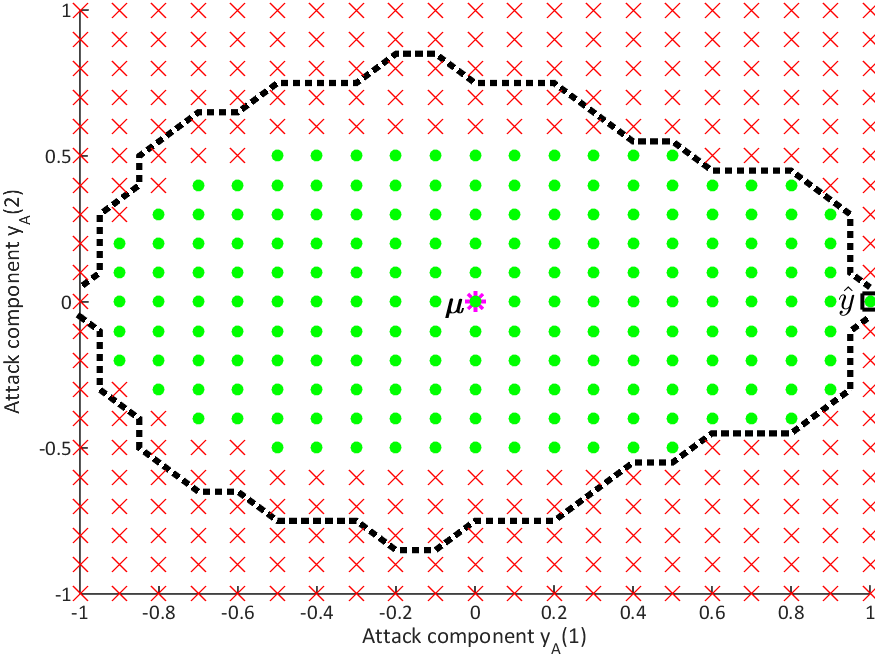}
\caption{Numerically generated plot to study the region of divergence for Algorithm~\ref{algo:IBR} in two dimensions. The value of the mean $\mu = (0,0)$ \as{(shown as a (magenta) star)}, and the sensor's value $\hat{y} = (1,0)$\as{, shown as a (black) square}. For every value of~\as{$y_A=[y_A(1),y_A(2)]$} on a grid in the neighborhood of $\mu$, we run Algorithm~\ref{algo:IBR} for a set of different initial values $\bar{y}_0$. If the algorithm converges to an $\alpha^* \neq 1$ for some choice of $\bar{y}_0$, then the corresponding point $y_A$ is \as{shown as a (green) dot}. Otherwise, it is \as{shown as a (red) cross}. The analytically derived necessary conditions from Theorem~\ref{thm:nec2} are shown as a \as{dashed (black) contour}.}\label{fig:convergence}
\end{figure}

Next, we address the sufficient condition for strong convergence of Algorithm~\ref{algo:IBR} in this case of equal means.
\begin{corollary}[Sufficient Conditions with $\zeta = \mu$]\label{thm:suff}
Suppose that $\zeta = \mu$. Algorithm~\ref{algo:IBR} will converge in the strong sense if the following two conditions are satisfied.
\begin{enumerate}
\Item \begin{align}
(y_A - \mu)^T(\hat{y} - \mu) \leq 0\,,\label{eq:suf1}
\end{align}
\Item \begin{align}
\norm{y_A - \mu} \leq \norm{\Proj(\hat{y}-\mu, y_A-\mu)}\,.\label{eq:suf2}
\end{align}
\end{enumerate}
Additionally, as long as the points $\mu, y_A$ and $\hat{y}$ are \emph{non-collinear}, condition~\eqref{eq:suf1} is also necessary for strong convergence.
\end{corollary}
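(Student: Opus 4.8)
The plan is to derive Corollary~\ref{thm:suff} as a direct specialization of Theorem~\ref{thm:suff_mismatch} by substituting $\zeta = \mu$ (equivalently $\delta = 0$), and then to prove the additional necessity claim of~\eqref{eq:suf1} by a separate geometric argument. First I would observe that when $\zeta = \mu$, Assumption~\ref{as:mismatch} holds trivially and $\hat{\mu} = \mu$, exactly as in the proof of Corollary~\ref{thm:nec2}. Substituting into~\eqref{eq:suf1_mismatch} immediately yields~\eqref{eq:suf1}. For~\eqref{eq:suf2_mismatch}, the substitution turns the right-hand side into $\min(\norm{\Proj(\hat{y}-\mu, y_A-\mu)}, \norm{\Proj(\hat{y}-\mu, y_A-\hat{y})})$; I would then argue that under condition~\eqref{eq:suf1} the first term is the smaller of the two, so the $\min$ collapses to $\norm{\Proj(\hat{y}-\mu, y_A-\mu)}$, recovering~\eqref{eq:suf2}. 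This reduction of the two-term minimum to a single term is the one genuinely new computation; I expect it to follow from a short projection-length comparison once the angle between $y_A - \mu$ and $\hat{y} - \mu$ is constrained to lie in $[\pi/2, \pi]$ by~\eqref{eq:suf1}, together with the decomposition $y_A - \hat{y} = (y_A - \mu) - (\hat{y} - \mu)$.

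Having established sufficiency, I would turn to the necessity of~\eqref{eq:suf1} under the non-collinearity hypothesis. The cleanest route is to reuse the necessity construction already given at the end of the proof of Theorem~\ref{thm:suff_mismatch}: if~\eqref{eq:suf1} is violated, the angle between $y_A - \mu$ and $\hat{y} - \mu$ lies in $[0, \pi/2)$, and one can choose $\bar{y}_0$ arbitrarily close to the red region so that the gain coefficient in front of $\norm{y_A - \mu}$ in the recursion~\eqref{eq:again_yi_mismatch} becomes arbitrarily large, forcing $\alpha_1 = 1$ in a single iteration and thereby destroying strong convergence. I would note that non-collinearity of $\mu, y_A, \hat{y}$ is precisely what guarantees that a nondegenerate blue region exists adjacent to the red region, so that such a $\bar{y}_0$ can actually be selected; in the collinear case the geometry degenerates and the construction need not apply, which is why the hypothesis is stated.

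The main obstacle I anticipate is the minimum-collapse step for~\eqref{eq:suf2}: one must verify carefully that~\eqref{eq:suf1} really does force $\norm{\Proj(\hat{y}-\mu, y_A-\mu)} \leq \norm{\Proj(\hat{y}-\mu, y_A-\hat{y})}$ rather than the reverse. I would handle this by writing both projections as $\norm{\hat{y}-\mu}$ times a cosine of the relevant angle, using the sign information from~\eqref{eq:suf1} to control those angles, and confirming that the inequality points the right way; the non-collinearity assumption again rules out the degenerate boundary case where the two projection lengths coincide trivially. Everything else is a mechanical substitution of $\delta = 0$ and $\hat{\mu} = \mu$ into the already-proved Theorem~\ref{thm:suff_mismatch}, so the proof should be short.
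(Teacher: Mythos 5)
Your proposal is correct and follows essentially the same route as the paper: specialize Theorem~\ref{thm:suff_mismatch} with $\zeta = \mu$ (so $\hat{\mu}=\mu$), collapse the two-term minimum in~\eqref{eq:suf2_mismatch} to $\norm{\Proj(\hat{y}-\mu, y_A-\mu)}$ under~\eqref{eq:suf1}, and reuse the necessity construction from the end of that theorem's proof. The paper completes the minimum-collapse step via the exterior-angle property in the triangle $(\hat{y},\mu,y_A)$, giving $\cos(\pi-\gamma)\leq\cos\theta$; your suggested decomposition $y_A-\hat{y}=(y_A-\mu)-(\hat{y}-\mu)$ yields the same inequality directly, since $(\hat{y}-\mu)^T(y_A-\mu)\leq 0$ makes $\abs{(\hat{y}-\mu)^T(y_A-\hat{y})} = \norm{\hat{y}-\mu}^2 + \abs{(\hat{y}-\mu)^T(y_A-\mu)} \geq \abs{(\hat{y}-\mu)^T(y_A-\mu)}$.
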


\begin{proof}
Since $\zeta = \mu$, condition~\eqref{eq:suf1_mismatch} reduces to \eqref{eq:suf1}, while \eqref{eq:suf2_mismatch} reduces to
\begin{align}\label{eq:lowerbd_equalmeans}
\norm{y_A - \mu} \leq \min(&\norm{\Proj(\hat{y}-\mu, y_A-\mu)}, \norm{\Proj(\hat{y}-\mu, y_A-\hat{y})})
\end{align}
We now refer the reader to Figure~\ref{fig:cor_suf} for an illustration of this scenario in which the angle $\gamma$ between the vectors $y_A - \mu$ and $\hat{y} - \mu$ is in the interval $[\pi/2, \pi]$ (i.e., condition~\eqref{eq:suf1} holds). 
\begin{figure}[!h]
\centering
\includegraphics[width=0.6\hsize]{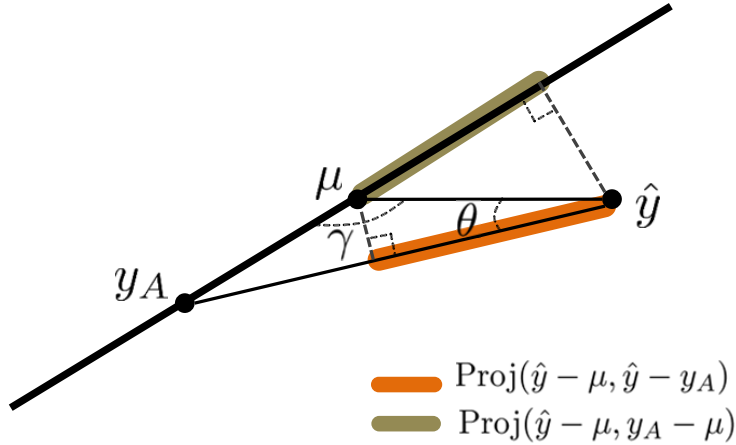}
\caption{Illustration of a geometric property arising in the proof of Corollary~\ref{thm:suff}. In this special case, $\norm{\Proj(\hat{y}-\mu, y_A-\mu)} \leq \norm{\Proj(\hat{y}-\mu, y_A-\hat{y})}$.}\label{fig:cor_suf}
\end{figure}
From the exterior angle property applied to triangle formed by $\hat{y}, \mu$ and $y_A$, we conclude that $\pi - \gamma > \theta$, where $\theta$ is the angle between the vectors $\mu - \hat{y}$ and $y_A - \hat{y}$. Therefore,
\begin{align*}
\norm{\hat{y} - \mu}\cos(\pi-\gamma) \leq \norm{\hat{y} - \mu}\cos{\theta} \Leftrightarrow \norm{\Proj(\hat{y}-\mu, y_A-\mu)} \leq \norm{\Proj(\hat{y}-\mu, y_A-\hat{y})}\,.
\end{align*}
Therefore, \eqref{eq:lowerbd_equalmeans} reduces to~\eqref{eq:suf2}. Applying Theorem~\ref{thm:suff_mismatch}, we obtain the desired claim. The necessity of~\eqref{eq:suf1} follows using the same construction as in the proof of Theorem~\ref{thm:suff_mismatch}.
\end{proof}
\begin{remark}
In case the points $\mu, y_A$ and $\hat{y}$ become \emph{collinear}, the problem reduces to the scalar case described in \cite{SDB-AS-CL:14}.
\end{remark}

This result is numerically verified in Figure~\ref{fig:sufficiency}. We can see that the theoretical conditions have a fairly low gap, especially in the scalar case, i.e., for choices of $y_A$ that lie on the $x$-axis, in which case the result is known to be tight~\cite{SDB-AS-CL:14}. 

\begin{figure}[!t]
\centering
\includegraphics[width=0.6\hsize]{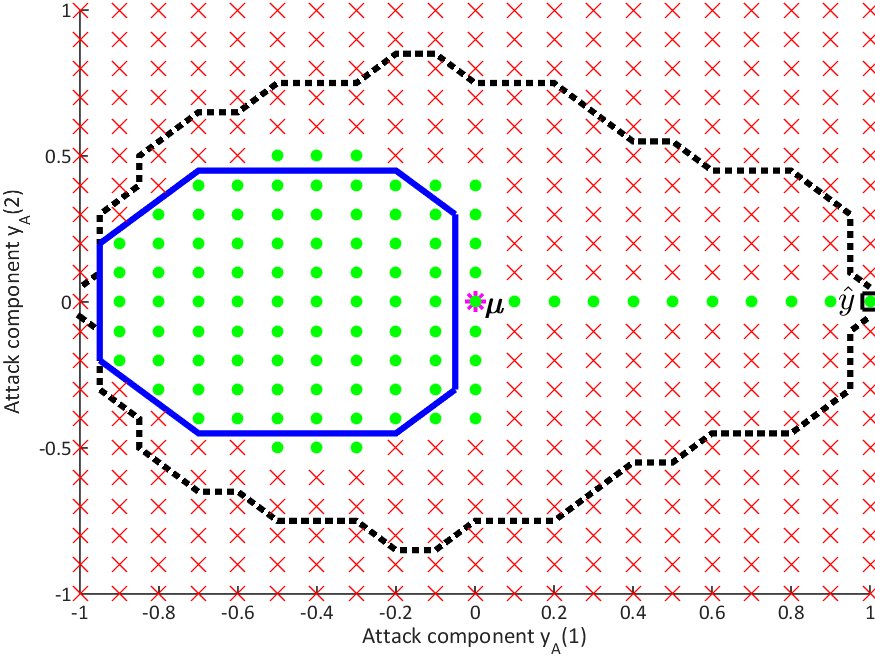}
\caption{Numerically generated plot to study the region of convergence for Algorithm~\ref{algo:IBR} in two dimensions. The value of the mean $\mu = (0,0)$ \as{(shown as a (magenta) star)}, and the sensor's value $\hat{y} = (1,0)$\as{, shown as a (black) square}. For every value of ~\as{$y_A=[y_A(1),y_A(2)]$} on a grid in the neighborhood of $\mu$, we run Algorithm~\ref{algo:IBR} for a set of different initial values $\bar{y}_0$. If the algorithm converges to an $\alpha^* \neq 1$ for \emph{every} choice of $\bar{y}_0$, then the corresponding point $y_A$ is  \as{shown as a (green) dot}. Otherwise, it is \as{shown with a (red) cross}. The analytically derived sufficient conditions from Corollary~\ref{thm:suff} is shown as a \as{solid (blue) contour}. The analytically derived necessary conditions from Corollary~\ref{thm:nec2} are shown as a \as{dashed (black) contour}.}\label{fig:sufficiency}
\end{figure}

\section{Conclusion and Future Directions}\label{sec:conclusion}

This work introduced a novel approach to trusted computation when a central node is leveraged to carry out a complex computation, but is likely to be compromised by an adversary. In our approach, we considered a sensor that may perform approximate but trusted computation on partial data, which is then fused with the output of the central computer in an optimal manner. We proposed a game-theoretic formulation and formalized an iterated best response algorithm. Formal statements were derived that characterize parameter regimes under which the iterative algorithm converges. The derived necessary and sufficient conditions become relatively tight in the case when the distributions of the unknown random variables used by the defender and the attacker to compute their respective cost functions have identical means. Numerical simulations validate our theoretical results.

This work spawns several future directions in this general theme of trusted computation. \sdb{One promising direction is to apply these concepts to a specific computation, such as maximum eigenvalue, to explore tradeoffs between the size of the subset of the original data that needs to be retained by the sensor and the sizes of the regions of convergence.} Further, this paper posed trusted computation as a static problem. It would be of interest to explore dynamic variants of this problem. For example, in this paper, the parameter~$y_A$ is simply chosen by the attacker and is kept fixed in the duration of the game. What if~$y_A$ were to also depend upon the defender's strategy~$\alpha$? It would also be of interest to explore more specific models of attacks in which the attacker can only attack at the computational level or at the data level. \sdb{Another interesting research direction would be to formulate scenarios in which the data owner may also use similar iterative schemes as a \emph{honeypot} \cite{FC:06} to decide whether a particular third-party computation service is trustworthy.}


\end{document}